\tikzset{% customization of pattern
         % based on <m.wibrow@gm...> - 2013-03-24 07:20: 
        hatch distance/.store in=\hatchdistance,
        hatch distance=5pt,
        hatch thickness/.store in=\hatchthickness,
        hatch thickness=5pt
        }
\pgfqpoint{\hatchdistance}{\hatchdistance}}% above right
\newcommand{\op}[1]{\operatorname{#1}}
\newcommand{\inv}{^{-1}}
\newcommand{\C}{\mathbb{C}}
\newcommand{\lan}{\langle}
\newcommand{\ran}{\rangle}
\newcommand{\nin}{\not\in}
\newenvironment{quest}[1]
  {\innercustomthm}
  {\endinnercustomthm}
\theoremstyle{remark}
\newtheorem{thm}{Theorem}[section]
\newtheorem{lem}[thm]{Lemma}
\newtheorem{cor}[thm]{Corollary}
\theoremstyle{remark}
\newtheorem{ex}[thm]{Example}
\newcommand{\DW}[1]{\textcolor{purple}{(DW: #1)}}
\newcommand{\comment}[1]{}
\begin{document}
\title{Graph-Theoretic Approach to Quantum Error Correction}
% \title{Putting Errors First -- A Graph-Theoretic Approach to Quantum Error Correction}
% \preprint{QECRSCCG}
% \title{Quantum Error Correction with Reflexive Stabilizer Codes and Cayley Graphs}
% \title[QECCRS]{Quantum Error Correcting Codes from Reflexive Stabilizers}
\thanks{This work was presented as a poster at The Workshop on Algebraic Graph Theory and Quantum Information at the Fields Institute, and at Frontiers in Quantum Computing at University of Rhode Island. }

\author{Robert R. Vandermolen}
\email{Email: {\bf robert.vandermolen@smwc.edu}}
\affiliation{
Department of Science and Mathematics, 
Saint Mary-of-the-Woods College, 
Saint Mary-of-the-Woods, IN, USA
}

\author{Duncan Wright}
\email{Email: {\bf dwright2@wpi.edu} \\ 
Webpage: \url{https://www.wpi.edu/people/faculty/dwright2 } }
\affiliation{
Department of Mathematical Sciences, 
Worcester Polytechnic Institute, 
Worcester, MA, USA
}

%----------------------------------------------------------------
%---------------------------------------------------------------

%					 ABSTRACT:
%			            

%-----------------------------------------------------------------
%----------------------------------------------------------------

% \begin{abstract}
% We investigate a novel class of quantum error correcting codes arising from edge avoidance in a special class of Cayley graphs. 
% From this technique we develop a new class of quantum stabilizer codes %, reflexive stabilizer codes, 
% for error correction on both qubits and multi-level quantum systems represented as qudits. 
% We benchmark the correction capability for the case of single qubit errors by comparison to existing stabilizer codes that are widely used. In addition, we present an optimal encoding that protects 
% we provide a heuristic algorithm to correct errors from an arbitrary noisy channel. Furthermore, we present a minimal encoding for single qudit errors on a four-state system and an optimal encoding for noise characterized by fully correlated errors, illustrating the strength and utility of these new codes, by achieving a better encoding rate than previously found. 
% \end{abstract}

\begin{abstract}
We investigate a novel class of quantum error correcting codes to correct errors on both qubits and higher-state quantum systems represented as qudits.
%and we develop a graph-theoretic representation for both an arbitrary error set and this new class of reflexive stabilizer codes. 
These codes arise from an original graph-theoretic representation of sets of quantum errors. In this new framework, we represent the algebraic conditions for error correction in terms of edge avoidance between graphs providing a visual representation of the interplay between errors and error correcting codes. 
Most importantly, this framework supports the development of quantum codes that correct against a predetermined set of errors, in contrast to current methods. A heuristic algorithm is presented, providing steps to develop codes that correct against an arbitrary noisy channel. 
We benchmark the correction capability of reflexive stabilizer codes for the case of single qubit errors by comparison to existing stabilizer codes that are widely used. In addition, we present two instances of optimal encodings: an optimal encoding for fully correlated noise which achieves a higher encoding rate than previously known, and a minimal encoding for single qudit errors on a four-state system.
\end{abstract}

%\keywords{Suggested keywords}%Use showkeys class option if keyword

\maketitle

% \tableofcontents

%———————————— - - - - - SECTION ———— - - - - 

\section{Introduction}
Error correcting codes are essential tools in communication theory as they provide the means for the reliable delivery of data over noisy communication channels. In classical computing theory we have found that the ability to correct single-bit errors is not only fundamental, but sufficient for most purposes \cite{eccfundamentals}.
This classical computing mindset has influenced the current approach to quantum error correction, with the majority of work focusing on the correction of single-qubit flip, phase and phase-flip errors, characterized by tensors of Pauli-spin operators \cite{ste98,errortheory,nielsen2002quantum,ortho,nonbinary}.

Amongst the approaches influenced by classical computing, surface codes have made remarkable achievements, for instance, proving that fault tolerance is theoretically possible to achieve once certain levels of fidelity are reached \cite{surf1,surf2,surf3}.
However, the proof relies on the assumption that the errors to be corrected are uncorrelated across both time and space. 
Recently, the validity of this assumption has been brought into question, by the experimental observation of correlated errors across both time and space \cite{correlated}. Practically, the assumption of uncorrelated errors can lead to lower rates of error correction and fidelity \cite{temporalcorrelation}. For these reasons, the ability to correct correlated errors has become increasingly relevant. 
Moreover, the wealth of research in engineering a quantum computer has resulted in a wide variety of architectures such as superconducting qubits\cite{superconducting}, quantum dots\cite{dots}, trapped ions\cite{trappedions}, photonics\cite{photons}, and more \cite{quantumcomputing}.
% from superconducting qubits to quantum dots and many others \cite{superconducting, dots, trappedions, photons, quantumcomputing}. 
Each systems' qubit architecture comes with an intrinsic error set, incentivizing error correcting codes that are developed to correct a pre-defined set of errors. 
 
The reflexive stabilizer codes introduced in this manuscript are not only capable of correcting correlated errors, but the framework allows for the development of codes that correct any given error set. %to correct both correlated errors and , providing a a stabilizer code, which in future work, the authors plan to put into the framework of surface codes. 
This novel approach to quantum error correction uses edge avoidance in a special class of graphs to avoid arbitrary error sets, including those correlated across space.  Furthermore, these codes are developed for qudits, allowing for implementation when more than two energy levels are measurable, such as the silicon-based quantum dot \cite{quditsnotqubits}. 
All codings will be done into strings of qudits, represented by $\C^d$, the computational basis of dimension $d$. It is worth noting that this is not the first application of graph theory in quantum error correcting codes, see e.g.\ \cite{graphstates}.

The paper is organized as follows. In Section~\ref{old error} we recall essential background in quantum error correction, stabilizer codes, and graph theory.  Definitions and useful properties are presented for, amongst others, the Pauli error operators and Cayley graphs. In Section~\ref{cayley sect} we give the specific graphs we consider to encode quantum errors. Next, in Section~\ref{reflexive} the novel reflexive stabilizer codes are introduced alongside their graph representation. To show the initial benefit of of these new codes, we give a novel minimal encoding of a single qudit on a 4-state system in Subsection~\ref{perfectcoding}. Further, in Subsection~\ref{fullycorrelated} we consider fully correlated noise and achieve an optimal encoding by reflexive stabilizer codes, improving on the results of \cite{full}. 
%\RV{this needs to be bragged about earlier in the introduction too!}. 
We present a heuristic algorithm to build a reflexive stabilizer code that corrects a given error set in Section~\ref{heu} before concluding in Section~\ref{discussion}.

% Background stuff %
%---------------------------------------------------------------------------%

\section{Preliminaries}\label{old error}
% In this section, we provide the reader with necessary terminology, as well as, background information on quantum stabilizer codes and Cayley graphs that will be utilized throughout this manuscript. 
% \DW{Include three references? We should have at least two at the end of the latter two subsections here.}

% \subsection{Terminology}
We briefly review the relevant terms and notations used throughout this manuscript. % from literature. 
All codewords will be represented as strings (or the superposition of strings) of qudits from the quantum $d$-ary alphabet $\C^d$, where $d=p^m$ such that $p$ is prime and $m$ is an integer. Unlike classical computing, we also consider codewords that are the superposition of those strings from the computational basis. Moreover, we set $\omega=\text{exp}(2\pi i/p)$ as the primitive $p^\text{th}$ root of unity. 

As with other stabilizer codes (see e.g.\ \cite{goodcodes, nonbinary}), errors will be labeled with strings from the field $\mathbb{F}_d$ on $d=p^m$ elements. Given a linear subset $C\subseteq \mathbb{F}_d^n$, we denote by $C^\perp$ the orthogonal subspace with respect to the inner product $\left\lan\,a,b\,\right\ran=\sum_{i=1}^na_i b_i$, for a chosen basis $\{e_i\}$ of $\mathbb{F}_d^n$ over $\mathbb{F}_d$.
% \begin{equation}\label{inner}
% \left\lan\,a,b\,\right\ran=\sum_{i=1}^na_i b_i.
% \end{equation}
That is %\DW{I want to remove the following equation.}\RV{go for it! but please do not remove the sentence that comes before... I just changed it... but please keep it}
\begin{equation}\label{Cperp}
C^\perp=\left.\{ v\in\mathbb{F}_d^n\,\right|\,\lan v,a\ran=0,\,\,\,\forall\,a\in C\}
\end{equation}
We define the weight of $C$ as the minimum Hamming weight $\op{w}(c)$; i.e.\ the number of non-zero entries, across all the elements $c\in C$, 
\begin{equation}
\op{wt}(C)=\op{min}\left.\Big\{\op{w}(c)\,\right|\,c\in C\setminus\{\overline{0}\}\Big\},
\end{equation}
where $\overline{0}$ is the string of all zeroes. We similarly denote the string of all ones by $\overline{1}$.

For ease of calculations, we fix a basis for $\mathbb{F}_d$ over $\mathbb{F}_p$, labeled $\{f_i\,:\, i\in\{1,\ldots,m\}\}$, and represent elements in terms of this basis as, for example, $a=\sum_{i=1}^m\alpha_if_i$ and $b=\sum_{i=1}^m\beta_if_i$. Furthermore, given these representations, we define the inner product $*$ on $\mathbb{F}_d$ by 
\begin{equation}\label{inner_field}
    a*b=\sum_{i=1}^m\alpha_i\beta_i.
\end{equation}

%  we choose , and set the computational basis as $\mc A=\{|x\ran\,:\,x\in\mathbb{F}_d\}$ which is an orthonormal basis of the Hilbert space $H_{\mc A}=\C^d$. Note that the field with $d$ elements $A=\mathbb{F}_d$ is simply the classical alphabet used to represent dits. 
% For ease of calculations, we fix a basis for $\mathbb{F}_d$ over $\mathbb{F}_p$, labelled $\{f_i\,:\, i\in\{1,\ldots,m\}\}$.

\subsection{The error group}
% Next we recall relevant background on quantum errors and error correction. 
Single qudit errors are defined using the generalized Pauli matrices, $X(a)$ and $Z(b)$ for each $a,b\in\mathbb{F}_d$, whose action on $|x\ran\in\C^d$ is given by 
\begin{equation}\label{paulis}
 X(a)|x\ran=|x+a\ran \;\;\text{ and }\;\; Z(b)|x\ran=\omega^{b*x}|x\ran,
\end{equation}
where $\omega$ is the primitive $p^\text{th}$ root of unity. The operators $X(a)$ and $Z(b)$ are referred to as the flip and phase errors, respectively. We will refer to the operator $Y(a)=\omega X(a)Z(a)$ as the phase-flip error.  
% We discuss this case in more detail in Example \ref{qubit basis}. 
For qubits; i.e.\ when $p=d=2$, one quickly notes that the standard Pauli matrices are given by 
\begin{subequations}\label{qubit basis}
\begin{align}
X(0)=Z(0)=\mathbb{1}_2\;\;\;\;X=X(1)=\begin{bmatrix}0&1\\1&0\end{bmatrix}\\
Z=Z(1)=\begin{bmatrix}1&0\\0&-1\end{bmatrix}\;\;\;\;Y=iXZ=\begin{bmatrix}0&-i\\i&0\end{bmatrix} 
\end{align}
\end{subequations}

% To be able to have the most agile codes we will also consider the following definition.
% \begin{defn}
% For $a\in\mathbb{F}_d$ we define the \textbf{phase-flip error} as,
% \begin{equation}
% Y(a)=X(a)Z(a).
% \end{equation}
% Thus for $|x\ran\in\mc{A}$ we have
% \begin{equation}
% Y(a)|x\ran=\omega^{a*x}|x+a\ran.
% \end{equation}
% \end{defn}

Notice that the inner product $*$, appearing in Equation~\eqref{paulis} and defined in Equation~\eqref{inner_field}, is in one-to-one correspondence with a trace operator $tr_*: \mathbb{F}_d\mapsto\mathbb{F}_p$ defined by the basis $\{f_i\}_i^m$. Often in the literature (see e.g.\ \cite{nonbinary}), a trace operator is used in the definition of the generalized Pauli operators, yet the definition is independent of choice of trace operator (see again \cite{nonbinary}).
For this reason, in this manuscript, we make use of the inner product definition, choosing a notation similar to that in \cite{ortho}. 

For errors on an $n$-qudit system, we concatenate the Pauli operators to define the \textbf{error operator} 
\begin{equation}
D_{a,b}=X(a_1)Z(b_1)\otimes...\otimes X(a_n)Z(b_n)
\end{equation}
for each $a=(a_{1},\ldots,a_{n}),b=(b_{1},\ldots,b_{n})\in\mathbb{F}_d^{n}$. 
One will verify that%, for $a,b, c, d\in\mathbb{F}_d^n$, we have 
\begin{equation}\label{Dab props}
D_{a,b}\inv = D_{-a,-b}
\quad\text{and}\quad 
D_{a,b}D_{ c, d} = \omega^{-\left\lan b, c\right\ran} D_{a+ c,b+ d}.
\end{equation}
Hence, the collection of $n$-qudit errors generates the multiplicative \textbf{error group} % of order $p^{2mn+1}$
\begin{equation}
% \mathcal{E}_{n,d}=
\mathcal{E}_n=\left\{\omega^\kappa D_{a,b}\;\;\Big|\;\; a,b\in\mathbb{F}_d^{n},\,\, \kappa\in\{0,\ldots,p-1\}\,\right\}.
\end{equation}
% We will refer to this group as the \textbf{error group}. Furthermore, w
We will refer to any non-trivial subset $\mathscr{E}$ of $\mathcal{E}_n$ as an \textbf{error set}. 
We will assume that $\mathbb{1}=D_{\overline 0, \overline 0}$ is in every error set as one should always protect against no error. 

% It is easily seen that the center, $\mathcal{Z}$, of the group $\mathcal{E}_n$ is generated by $\omega\mathbb{1}$ and therefore has order $p$.

% \begin{rem}
% Here we enumerate some of the properties of the errors in $\mathcal{E}_n$. For $a,b, c, d\in\mathbb{F}_d^n$, we have 
% \begin{equation}\label{Dab inv}
% D_{a,b}\inv = D_{-a,-b}
% \end{equation}
% and
% \begin{equation}\label{Dab linear}
% D_{a,b}D_{ c, d} = \omega^{-\left\lanb, c\right\ran} D_{a+ c,b+ d}.
% \end{equation}
% \end{rem}

\comment{
In addition to Equation~\eqref{Dab props}, we have 
%For $a,b, c, d\in\mathbb{F}_d^n$ and $\kappa,\kappa'\in\{0,\ldots,p-1\}$ then we have 
the following relation: 
\begin{equation}\label{eqcomm}
D_{a,b}D_{ c, d}=\omega^{(a,b)\star( c,b)}D_{ c, d}D_{a,b}
\end{equation}
where 
\begin{equation}\label{star}
(a,b)\star( c, d)=\left\lan\, b, c\,\right\ran-\left\lan\,a, d\,\right\ran\end{equation}
% and 
% \begin{equation}\label{inner}
% \left\lan\,a,b\,\right\ran=\sum_{i=1}^na_i b_i.
% \end{equation} 
is the symplectic inner product on $\mathbb{F}_d^{2n}$.
Therefore, two error operators $\omega^\kappa D_{a,b}$ and $\omega^{\kappa'}D_{ c, d}$ commute if and only if 
\begin{equation}\label{star comm}
    (a,b)\star( c, d)=0.
\end{equation}
This equivalency will be utilized extensively in the following subsection as commutativity is essential in the definition of stabilizer codes. }

% \begin{rem}\label{comm}
% For $a,b, c, d\in\mathbb{F}_d^n$ and $\kappa,\kappa'\in\{0,\ldots,p-1\}$ then we have the following relations
% \begin{equation}\label{eqcomm}
% \begin{subequations}\begin{align}
% \left(\omega^\kappa D_{a,b}\right)\left(\omega^{\kappa'}D_{ c, d}\right)&=\omega^{(a,b)\star( c,b)}\Big(\omega^{\kappa'}D_{ c, d}\Big)\Big(\omega^\kappa D_{a,b}\Big)
% \end{align}\label{eqn:all-lines}\end{subequations}
% \end{equation}
% where 
% \begin{equation}\label{star}
% (a,b)\star( c, d)=\left\lan\, b, c\,\right\ran-\left\lan\,a, d\,\right\ran\end{equation}
% % and 
% % \begin{equation}\label{inner}
% % \left\lan\,a,b\,\right\ran=\sum_{i=1}^na_i b_i.
% % \end{equation} 
% with inner product given by Equation~\eqref{inner}. 
% Therefore, two operators $\omega^\kappa D_{a,b}$ and $\omega^{\kappa'}D_{ c, d}$ commute if and only if \[(a,b)\star( c, d)=0.\]
% This operation is a symplectic inner product on the space $\mathbb{F}_d^{2n}$. 
% \end{rem}
\noindent
For a more thorough introduction to quantum error correction, see e.g. \cite{fault, goodcodes, gf4, calcliff, nonbinary, ortho,errortheory,mixed,ste98}. Next, we introduce the basics of quantum stabilizer codes. 

\subsection{Stabilizer Codes}\label{stabilizer}
Briefly, the objective of quantum stabilizer codes is to be able to protect from any error of a commutative subgroup of errors $S$ and correct any error from a larger set of errors $S\subset\mathscr{E}\subset\mathcal{E}_n$. The reader less familiar with stabilizer codes is referred to \cite{ste98, ortho, goodcodes, errortheory}.

Let $S$ be a commutative subgroup of errors containing the center $\mathcal{Z}$. A \textbf{quantum stabilizer code} $R$ is any joint eigenspace of the operators in $S$. We refer to $S$ as the \textbf{stabilizer} of $R$. In practice, $R$ will be represented by a collection of orthogonal eigenvectors $|\Phi_1\ran,\ |\Phi_2\ran, \cdots$, which we refer to as code words. Necessary and sufficient conditions for $R$ to protect from any error in a given error set $\mathscr{E}$ have been established in \cite{errortheory,mixed} and are as follows: for any two distinct code words $|\Phi_1\ran$ and any two errors $E_1,E_2\in\mathscr{E}$, we must have
%\begin{equation}\label{qecc conds}
%\begin{split}
\begin{subequations}\label{qecc conds}
\begin{align}
\lan \Phi_1|E_1\inv E_2|\Phi_2\ran & =0,\\
\lan \Phi_1|E_1\inv E_2|\Phi_1\ran & =\lan \Phi_2|E_1\inv E_2|\Phi_2\ran.
\end{align}
\end{subequations}
%\end{split}
%\end{equation}

Intuitively, these conditions guarantee that regardless of the errors that might occur to distinct code words, their perturbed states remain distinguishable by quantum measurement and have equal weight. 
It is of note that we will always assume $\mathbb{1}\in\mathscr{E}$ as one should always protect from no errors occurring. 

Due to its appearance in Equation~\ref{qecc conds} and its pervasiveness in the theory, we will refer to $E_1\inv E_2$ as a \textbf{conjugate error} of $\mathscr{E}$. Furthermore, we will denote the set of conjugate errors of $\mathscr{E}$ by 
\begin{equation}\label{E2}
\mathscr{E}^2= \left.\Big\{E_1\inv E_2\, \right|\  E_1,E_2\in\mathscr{E}\Big\}.
\end{equation}

For the error correction capabilities of stabilizer codes an additional subset of errors is important, namely the \textbf{centralizer} of $S$, which we denote by $S^\perp$, i.e. the elements of $\mathcal{E}_n$ which commute with all elements of $S$. 
The following theorem follows identically to that in \cite[Theorem 1]{ortho} and \cite[Theorem 3]{nonbinary}. We include it for later reference, and ease of the reader. 

\begin{thm}\label{errorcorrecting}
Let $S$ be a commutative subgroup of $\mathcal{E}_n$ which contains the center, i.e.\ $\mathcal{Z}\subset S$.
Further, let $\mathscr{E}\subset\mathcal{E}_n$ be an error set.  Then any stabilizer code for $S$ is an error-correcting code which will correct any error from $\mathscr{E}$ if and only if every conjugate error $E\in\mathscr{E}^2$ satisfies either $E\in S$ or $E\nin S^\perp$. 
\end{thm}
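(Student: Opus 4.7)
The plan is to prove both directions using two structural facts about a stabilizer code $R$. First, every $s\in S$ acts on $R$ by a common unit-modulus scalar $\mu_s$, since $R$ is a joint eigenspace of unitary operators. Second, any two generalized Paulis $A,B$ satisfy a commutation relation of the form $AB=\omega^k BA$ for some $k\in\{0,\ldots,p-1\}$, a direct consequence of Equation~\eqref{Dab props}.

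For the forward direction, fix $E_1,E_2\in\mathscr{E}$, put $E=E_1\inv E_2\in\mathscr{E}^2$, and take any two code words $|\Phi_1\ran,|\Phi_2\ran\in R$. If $E\in S$, then $E$ acts as a single scalar $\mu_E$ on $R$, so $\lan\Phi_1|E|\Phi_2\ran=\mu_E\lan\Phi_1|\Phi_2\ran$; this vanishes for orthogonal code words and equals $\mu_E$ on the diagonal, confirming both parts of Equation~\eqref{qecc conds}. If instead $E\nin S^\perp$, one picks $s\in S$ with $sE=\omega^k Es$ for some $k\neq 0$ and sandwiches this relation between $\lan\Phi_1|$ and $|\Phi_2\ran$. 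Using $s|\Phi_2\ran=\mu_s|\Phi_2\ran$ on the right and the adjoint identity $\lan\Phi_1|s=\mu_s\lan\Phi_1|$ on the left (valid because $|\mu_s|=1$), the relation collapses to $(1-\omega^k)\mu_s\lan\Phi_1|E|\Phi_2\ran=0$, forcing $\lan\Phi_1|E|\Phi_2\ran=0$ for every pair of code words; both parts of Equation~\eqref{qecc conds} then hold trivially.

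For the converse, suppose some conjugate error $E\in\mathscr{E}^2$ lies in $S^\perp\setminus S$. Because $\mathcal{Z}\subset S$ and every non-trivial generalized Pauli has order $p$ modulo $\mathcal{Z}$, the cosets $S,ES,\ldots,E^{p-1}S$ are distinct, and $\lan S,E\ran$ is a commutative subgroup strictly larger than $S$ with $|\lan S,E\ran/\mathcal{Z}|=p\,|S/\mathcal{Z}|$. A dimension count shows $\lan S,E\ran$ has $p$ times as many joint eigenspaces as $S$, each of dimension $(\dim R)/p$; equivalently, each joint eigenspace $R$ of $S$ splits into $p$ distinct $E$-eigenspaces. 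Selecting code words $|\Phi_1\ran,|\Phi_2\ran$ from two of these $E$-eigenspaces in $R$ yields $\lan\Phi_1|E|\Phi_1\ran\neq\lan\Phi_2|E|\Phi_2\ran$, violating the second part of Equation~\eqref{qecc conds} for the pair $E_1,E_2$.

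The main obstacle is the dimension count in the converse: one must rule out the a priori possibility that $E$ acts as a scalar on some (hence every) joint eigenspace of $S$, in which case no splitting of $R$ occurs and the argument fails. The hypothesis $\mathcal{Z}\subset S$ is precisely what precludes this, since it guarantees that $\lan S,E\ran$ strictly enlarges $S$ modulo $\mathcal{Z}$ whenever $E\nin S$, and the resulting index-$p$ refinement of the eigenspace decomposition compels every $R$ to split into $p$ pieces.
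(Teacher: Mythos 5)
Your argument is correct, and it is essentially the proof the paper is implicitly relying on: the manuscript does not prove Theorem~\ref{errorcorrecting} itself but defers to \cite[Theorem 1]{ortho} and \cite[Theorem 3]{nonbinary}, and the proofs there run exactly as yours does --- the commutation relation $sE=\omega^kEs$ with $k\neq 0$ kills every matrix element $\lan\Phi_1|E|\Phi_2\ran$ when $E\nin S^\perp$, the scalar action handles $E\in S$, and a non-scalar action of $E\in S^\perp\setminus S$ on the joint eigenspace violates the diagonal condition in Equation~\eqref{qecc conds}. The one load-bearing claim you assert without justification is the dimension count, i.e.\ that all joint eigenspaces of a commutative subgroup of $\mathcal{E}_n$ containing $\mathcal{Z}$ are nonzero and of equal dimension, so that $R$ genuinely splits into $p$ pieces; this follows from the tracelessness of every non-central element of $\mathcal{E}_n$, which also yields a shortcut for your converse: if $P$ is the projector onto $R$, then $\op{tr}(EP)=0$ for $E\in S^\perp\setminus S$, so the unitary $E|_R$ has eigenvalues summing to zero and cannot be a scalar, and two eigenvectors of $E|_R$ with distinct eigenvalues already break the diagonal condition.
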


Commutativity of the error operators is characterized by the relation:
\begin{equation}\label{eqcomm}
D_{a,b}D_{ c, d}=\omega^{(a,b)\star( c,b)}D_{ c, d}D_{a,b}
\end{equation}
where 
\begin{equation}\label{star}
(a,b)\star( c, d)=\left\lan\, b, c\,\right\ran-\left\lan\,a, d\,\right\ran\end{equation}
is the symplectic inner product on $\mathbb{F}_d^{2n}$.
Explicitly, two error operators $\omega^\kappa D_{a,b}$ and $\omega^{\kappa'}D_{ c, d}$ commute if and only if $(a,b)\star( c, d)=0$.
\comment{\begin{equation}\label{star comm}
    (a,b)\star( c, d)=0.
\end{equation}}
One may recognize that these are simply a discrete form of the commutativity relations for the Weyl operators. 
% This equivalency will be utilized extensively in the following subsection as commutativity is essential in the definition of stabilizer codes. 

\subsection{Graph Theory}
Graph theory has been used both theoretically and experimentally in quantum error correcting codes before. For instance, graph state codes have been studied and generated experimentally as a form of measurement-based quantum computation \cite{graphstates}. In this work, we rely on graph theory as well. 

The main novelty of the current manuscript lies in the graph-theoretic representation of quantum stabilizer codes and the associated error sets. Before introducing these specialty graphs in the following section, we must first recall some of the basic terminology of graphs. % This representation is achieved through Cayley graphs which provide a means to represent a group action as a graph. \DW{Mention graph states (as per comment).} 

% First, we recall some of the basic terminology of graphs. 
A graph $G=(V,E)$ consists of a collection of vertices $V$ connected by a set of edges $E$. Throughout, we consider only undirected graphs meaning simply that the edges are directionless. 
Two vertices $u$ and $v$ connected by an edge $e=(u,v)$ are said to be adjacent, and the edge $e$ is said to be incident to both $u$ and $v$. 
Whenever $u=v$, the edge $(u,u)$ is referred to as a \textbf{loop}; we will also refer to the looped vertex $u$ as a loop. Whenever $u$ and $v$ are distinct, we refer to the edge $(u,v)$ as a \textbf{simple} edge. 
A graph is %\textbf{simple} whenever it contains no loops, and it is 
\textbf{complete} whenever each pair of distinct vertices is adjacent. 

In what follows, it is convenient to distinguish between the set of loops and the set of simple edges of a graph $G=(V,E)$. To this end, we will decompose the edge set $E$ into two sets: The set of simple, or non-looped, edges will be denoted by $\hat E$ and will be referred to as the \textbf{simple edges of $\mathbf{G}$}, and we denote by $\mathscr{L}_G$ the \textbf{loop set of $\mathbf{G}$}. 
When convenient, and without risk of confusion, we will refer to a vertex $a\in\mathscr{L}_\mathscr{E}$ as opposed to the more accurate $(a,a)\in\mathscr{L}_\mathscr{E}$. 

The \textbf{component} of a vertex $v\in V$ is the subgraph of $G$ consisting of only those vertices $V'$ that can be reached from $v$ (via consecutive edge-traversals) and those edges incident to the vertices in $V'$. 
Lastly, the \textbf{complement} of a (simple) graph $G$ is the graph $G^\perp$ on the same vertex set such that two vertices are adjacent in $G^\perp$ if and only if they are not adjacent in $G$. 
For a more thorough introduction to graph theory see e.g.\ \cite{graphtheorybook}.

\section{Graphs for quantum error correcting}\label{qecc graphs}
Next we develop the specialty graphs that we utilize to assist in and visualize quantum error correction. The novel  graph-theoretic representations of both quantum stabilizer codes and error sets allows one to easily identify the errors that an encoding will correct and, more importantly, an encoding that will correct a pre-determined error set. Specifically, we will define an error avoidance graph for an arbitrary error set, as well as linear undirected Cayley graphs. 
%The latter is a subclass of the well-known Cayley graphs which provide a means to represent a group action as a graph. 

\subsection{LUC Graphs}\label{cayley sect}
In this manuscript, we will use Cayley graphs to represent the encoding of our novel stabilizer codes. In short, Cayley graphs provide a means to represent a group action as a graph. 
% The most general Cayley graphs are directed graphs, however we will be interested in a specific class of undirected Cayley graphs which we refer to as linear undirected Cayley graphs, defined below. 
A Cayley graph $G_C=(V_C,E_C)$ is defined by a subset $C$ of its vertex set known as the \textbf{connecting set}. We restrict our attention to Cayley graphs whose connecting set $C$ is a linear subgroup of the the additive group $V_C=\mathbb{F}_d^n$. In this case the edge set $E_C$ is the collection of pairs $(a,a+c)$ for $a\in V_C$ and $c\in C$. 
Note that, since the group identity $\overline{0}$ is necessarily in $C$, there is a loop at every vertex. 
We will refer to such a $G_C$ as a  \textbf{linear undirected Cayley (LUC) graph}. 

In what follows we will be interested in subgraphs of LUC graphs obtained by deleting particular loops. Given a linear subset $C_1\subseteq C$ we set $G_C^{C_1}=(V_C^{C_1},E_C^{C_1})$ to be the subgraph of $G_C$ with edge set $E_C^{C_1}=\hat E_C \cup \mathscr{L}_{C_1}$ where $\mathscr{L}_{C_1}=C_1^\perp$.
That is, the only loops remaining from $G_C$ are at the vertices in $C_1^\perp$. The reason for this convention will become clear in Section~\ref{reflexive}. We will also refer to any graph $G_C^{C_1}$ as a LUC graph. Notice that $G_C^{\{\overline 0\}}=G_C$. We consider such a LUC graph in the following example. 
For more on Cayley graphs, in their full generality, %, both directed and undirected, 
see \cite{cayley}.

%%%%%%%%%%%%%%%%%%%%%%%%%%%%%%%%%%%%

\begin{ex}\label{example1}
Consider a three-state quantum system ($d=3$) of two qudits ($n=2$) and a connecting set $C=\{00,11,22\}\subseteq \mathbb{F}_3^2$. In this case, we have the LUC graph $G_C=G_C^{\{\overline 0\}}$ with vertex set $V_C=\mathbb{F}_3^2$ %we choose $d=3$, $n=2$, and
and edge set $E_C=\hat E_C \cup \mathscr{L}_{\{\overline 0\}}$ shown in  Figure~\ref{fig: ex1}.  
Note that $G_C$ consists of three complete components and that the component containing $00$ contains exactly those vertices in $C$. These observations can be generalized and are made formal in the following theorem. 
\end{ex}

%%%%%%%%%%%%%%%%%%%%%%%%%%%%%%%%%%%%%

\begin{figure}[h]
\begin{center}
\begin{tikzpicture}
  [scale=.6,auto=left,every node/.style={circle,fill=blue!20}]
 
 \node (n0) at  (0,5)  {00};
 \node (n1) at  (-3.21,3.83) {10};
 \node (n2) at (-4.924,0.868) {20};

 \node (n3) at (-4.33,-2.5) {01};

 \node (n4) at (-1.71,-4.698) {11};
 \node (n5) at (1.71,-4.698) {21};
 
 \node (n6) at (4.33,-2.5) {02};

\node (n7) at (4.924,0.868) {12};
\node (n8) at (3.21,3.83) {22};

  \foreach \from/\to in {n0/n4,n0/n8,n8/n4}
    \draw[ultra thick] (\from) -- (\to);
    
    \foreach \from/\to in {n1/n5,n1/n6,n5/n6}
    \draw[blue,ultra thick] (\from) -- (\to);
    
    \foreach \from/\to in {n2/n3,n2/n7,n7/n3}
    \draw[red,ultra thick] (\from) -- (\to);
    
    %\path (n0) edge [loop above] node {} (n0);

\end{tikzpicture}
\end{center}
% \vspace{-3mm}
\caption{The LUC graph $G_C=G_C^{{\{\overline 0\}}}$ with connecting set given in Example~\ref{example1}. The connecting set $C=\{00,11,22\}\subset V_C=\mathbb{F}_3^2$ (shown in black) is one of exactly three complete components. Since $\mathscr{L}_{\{\overline 0\}}%={\{\overline 0\}}^\perp
=\mathbb{F}_3^2$, there are loops at every vertex.  }
% \caption{The LUC graph $G_C=G_C^{{\{\overline 0\}}}$ given in Example~\ref{example1}. Notice that the connecting set $C=\{00,11,22\}\subset V_C=\mathbb{F}_3^2$ (shown in black) is one of exactly three complete components. These observations are formalized in Theorem~\ref{cayley thm}. }
\label{fig: ex1}
\end{figure}
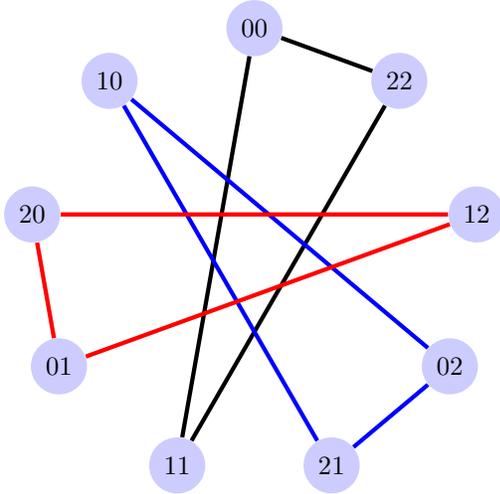

\begin{thm}\label{cayley thm}
Let $C$ be a linear subspace of $\mathbb{F}_d^n$ where $d=p^m$. The LUC graph $G_C$ has exactly $d^{n-\op{dim}(C)}$ number of complete components. %, each of which is complete. 
Moreover, the component which contains $\overline{0}$ is exactly $C$.
\end{thm}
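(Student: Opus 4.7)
The plan is to identify the components of $G_C$ with the cosets of $C$ in $\mathbb{F}_d^n$, and then verify that each coset induces a complete subgraph. First, I would use the definition of the edge set: an edge connects $a$ to $a+c$ for any $c\in C$. Since $C$ is a linear subspace, it is closed under addition and contains additive inverses, so any walk starting at a vertex $a$ reaches vertices of the form $a + c_1 + c_2 + \cdots + c_k$ with $c_i\in C$, and any such sum lies in $C$. Hence the component of $a$ is contained in the coset $a+C$. Conversely, every vertex in $a+C$ is adjacent to $a$ directly via an edge, so the component of $a$ equals $a+C$ exactly.

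Next I would count components by counting cosets. Because $C$ is a linear subspace of $\mathbb{F}_d^n$ of dimension $\dim(C)$, its order is $d^{\dim(C)}$, and the cosets partition $V_C = \mathbb{F}_d^n$ into $d^n/d^{\dim(C)} = d^{n-\dim(C)}$ disjoint pieces. This gives the stated component count.

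To see that each component is complete, I would take any two vertices $a+c_1$ and $a+c_2$ in a coset $a+C$. Their difference is $c_2-c_1$, which lies in $C$ since $C$ is a subspace. Therefore $(a+c_1, (a+c_1)+(c_2-c_1)) = (a+c_1, a+c_2)$ is an edge of $G_C$, showing every pair of distinct vertices in the coset is adjacent; loops are present at every vertex because $\overline{0}\in C$. The component containing $\overline{0}$ is by construction $\overline{0}+C = C$, giving the final claim.

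Since all ingredients — linearity of $C$, closure under addition, and the coset partition of a finite vector space — are elementary, I do not anticipate a real obstacle. The only subtlety worth stating carefully is that walks, not just single edges, must stay inside the coset; this is guaranteed precisely because $C$ is a subgroup rather than merely a symmetric subset of $\mathbb{F}_d^n$.
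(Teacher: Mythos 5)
Your proposal is correct and follows essentially the same route as the paper: the paper invokes translation invariance to reduce to the component of $\overline{0}$, which it shows equals $C$ and is complete, while you make the coset decomposition and the count $d^n/d^{\dim(C)}$ explicit. No gaps; your version just spells out the steps the paper compresses into ``due to linearity of $\mathbb{F}_d^n$, it is enough to~\ldots''.
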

\begin{proof}
Due to linearity of $\mathbb{F}_d^n$, it is enough to show the connected component, $H$, containing $\overline{0}$ is complete and contains exactly the elements from $C$. It is clear that $\overline{0}$ is connected to exactly those $c\in C$. Moreover, if $b\notin C$, then $b-a\notin C$ for any $a\in C$. Hence the connected component containing $\overline{0}$ contains exactly the elements of $C$. Lastly, by linearity of $C$, if $a,\ b\in C$, then $b-a\in C$ and thus $(a,b)\in \hat E_C$. Thus $H$ is complete. 
\end{proof}

%%%%%%% ERROR AVOIDANCE GRAPH %%%%%%%%%%%%

\subsection{The Error Avoidance Graph}\label{error graph}
Next, we show how to associate a graph to the conjugate errors of an error set. This association allows one to determine an encoding that will correct a pre-determined set of errors, in contrast to many current quantum error correcting codes. Moreover, we will show that a such an encoding can be found relying only on graph-theoretic principles.   

Fix an error set $\mathscr{E}$ acting on a $d$-state quantum system of $n$ qudits. 
We denote by  $G_\mathscr{E}=(V_\mathscr{E},E_\mathscr{E})$ the graph with vertices $V_\mathscr{E}=\mathbb{F}_d^n$ and edge set 
\begin{equation}
E_\mathscr{E}=\{(a,b)\,|\,\omega^\kappa D_{ab}\in\mathscr{E}^2\}.
\end{equation}
In essence, $E_\mathscr{E}$ encodes the conjugate errors $\mathscr{E}^2$ by effectively connecting codewords that would fail the necessary distinctness conditions given in Equation~\eqref{qecc conds}. (This idea will be made formal in the next section.) 
For this reason, we refer to $G_\mathscr{E}$ as the \textbf{error avoidance graph}. 

% Notice that, unlike LUC graphs, error avoidance graphs may contain loops. 
The non-trivial loops indicate the possible strings at which flip and phase errors must occur simultaneously in $\mathscr{E}^2$, whereas the trivial loop (at $\overline{0}$) indicates no error occurring. 

% We will refer to the vertices of $G_\mathscr{E}$ with loops as the \textbf{loop set} and denote it by %$\mathscr{L}_\mathscr{E}$.
% \begin{equation}\label{avoidance}
% \mathscr{L}_{\mathscr{E}}=\left.\Big\{a\in\mathbb{F}_d^n\, \right|\ (a,a)\in E_\mathscr{E},\ a\neq \overline{0}\Big\}.
% \end{equation}
% We will denote the non-looped edges by $\hat E_\mathscr{E}=E_\mathscr{E}\backslash \mathscr{L}_\mathscr{E}$ so that 
% \begin{equation}
% E_\mathscr{E} = \hat E_\mathscr{E}\ \dot\cup\ \mathscr{L}_\mathscr{E}\ \dot\cup\ (\overline{0},\overline{0}).
% \end{equation}

%--------------EXAMPLE --------------
\begin{ex}\label{error graph example}
We consider a system of three qubits with correlated errors from the set 
\begin{equation}
\mathscr{E}=\{\mathbb{1},D_{e_1,e_2},D_{e_3,e_3},D_{e_2,e_1}\}. 
\end{equation}
% From the conjugate errors $\mathscr{E}^2$, one can verify that $E(G_\mathscr{E})$ is characterized by the difference set $D(\mathscr{E})=\{000,101,011\}$. 
The non-loop edges are given by 
\begin{equation}
\hat E_\mathscr{E}=\{(010,100),(011,101)\}
\end{equation}
and the loop set is given by 
\begin{equation}
\mathscr{L}_\mathscr{E}=\{000,110,001\}. 
\end{equation}
The graph $G_\mathscr{E}$ is shown in Figure~\ref{fig: dw1} with the loop set indicated by dark blue nodes.
\end{ex}

% \begin{ex}\label{error graph example}
% \begin{equation}
% \mathscr{E}=\{D_{e_2,e_3},D_{e_1,e_1},D_{e_3,e_2},\mathbb{1}\}
% \end{equation}
% Then the possible parameters in $\mathscr{E}^2$ are, 
% \begin{align*}
% (\overline{0},\overline{0})&, (010,001),(100,100),(001,010),\\
% &(110,101),(011,011),(101,110)
% \end{align*}
% Then we have that $G_\mathscr{E}$ is given in Figure~\ref{fig: error graph example} (without the loops drawn).

% Thus we have that 
% \begin{equation}
% \mathscr{L}_\mathscr{E}=\{100,011\}
% \end{equation}
% \end{ex}

% \begin{figure}
% \begin{center}
% \begin{tikzpicture}
%   [scale=.7,auto=left,every node/.style={circle,fill=blue!20}]
 
%  \node (n0) at  (0,4)  {000};
%  \node (n1) at  (-2.8,2.8) {001};
%  \node (n2) at (-4,0) {010};

%  \node (n3) at (-2.8,-2.8) {011};

%  \node (n4) at (0,-4) {100};
%  \node (n5) at (2.8,-2.8) {101};
 
%  \node (n6) at (4,0) {110};

% \node (n7) at (2.8,2.8) {111};

%   \foreach \from/\to in {n2/n1,n3/n6}
%     \draw[thick] (\from) -- (\to);
    
%     %\path (n0) edge [loop above] node {} (n0);

% \end{tikzpicture}
% \end{center}
% \caption{\RV{needs caption}}
% \label{fig: error graph example}
% \end{figure}

\section{Reflexive Stabilizer Codes}\label{reflexive}
In this section, we define a novel class of quantum stabilizer codes which arise from the connecting sets of LUC graphs. 
It is the interplay between the error avoidance and LUC graphs which allows us to develop our new class stabilizer codes with error avoidance at the forefront. 
 
% %with additional symmetries vis-\`a-vis their well-known predecessors; e.g.\ CSS codes.\cite{goodcodes} \DW{Check ref} \RV{I'm good with not mentioning CSS codes ever... especially not in the section that the point of it is that the graphs make the codes...} It is these symmetries that make our novel codes amenable to the LUC graph representation.  

%%% ----------------------------------
%%% COMMENT MOVED TO END OF APPENDIX
%%% ----------------------------------

Let $C$ be a linear subspace of $\mathbb{F}_d^n$ and $C_1\subset C$. The \textbf{reflexive stabilizer} of $C$ with respect to $C_1$ is the subgroup of the error group $\mathcal{E}_n$ 
generated by
\begin{equation}\label{refl stab}
S_C^{C_1}=\left\lan D_{aa},\, D_{b0}\,:\, a\in C^\perp, \,b\in C_1 \right\ran
\end{equation}
One quickly sees that $S_{C}^{C_1}$ is commutative as $( a, a)\star( b, b)=0$ and $( a, a)\star( b,\overline{0})=0$ by Equations~\eqref{star} and \eqref{Cperp}. Therefore a reflexive stabilizer is indeed a quantum stabilizer. 
The following lemma gives the form of the centralizer of a reflexive stabilizer $S_C^{C_1}$. 

\begin{lem}\label{sperp}
Let $C$ be a linear subspace of $\mathbb{F}_d^n$ and $C_1\subset C$, then the centralizer of $S_{C}^{C_1}$ is given by
\begin{equation}
% \left(S^C_{C_1}\right)^\perp=\left\{\omega^\kappa D_{ab}\;\Big|\; a-b\in C,\;\text{and } a\in C_1^\perp \right\}
\left(S_C^{C_1}\right)^\perp=\left\langle D_{ab}\;\Big|\; a-b\in C,\;\text{and } a\in C_1^\perp \right\rangle.
\end{equation}
\end{lem}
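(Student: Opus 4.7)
The plan is to apply the symplectic commutation criterion stated in Equation~\eqref{eqcomm}: two operators $D_{x,y}$ and $D_{c,d}$ commute if and only if $(x,y)\star(c,d) = \langle y, c\rangle - \langle x, d\rangle = 0$. Since this commutation character is bilinear in each argument, and since $D_{c,d}D_{c',d'}$ is a scalar multiple of $D_{c+c',d+d'}$, the elements commuting with every element of $S_C^{C_1}$ are exactly those commuting with each generator. Thus the centralizer computation reduces to imposing two families of linear conditions, one for the generators $D_{a,a}$ with $a\in C^\perp$ and one for the generators $D_{b,0}$ with $b\in C_1$.

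For the first family, commuting with $D_{a,a}$ for every $a\in C^\perp$ yields $\langle y-x, a\rangle = 0$ for all $a\in C^\perp$, which by the double-annihilator identity $(C^\perp)^\perp = C$ translates to $y - x \in C$. For the second family, commuting with $D_{b,0}$ for every $b\in C_1$ yields $\langle y, b\rangle = 0$ for all $b\in C_1$, i.e., the appropriate coordinate lies in $C_1^\perp$. Relabeling $(x,y)$ as $(a,b)$ and using linearity of $C$ (so $b-a\in C$ iff $a-b\in C$) recovers the conditions stated in the lemma.

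Conversely, I would verify that any $D_{a,b}$ satisfying these two linear conditions passes the same bilinear tests against each generator, and hence commutes with every element of $S_C^{C_1}$; together with the scalars in the center $\mathcal{Z}$, which commute with everything, this identifies $(S_C^{C_1})^\perp$ with the subgroup described in the statement. The only step beyond a formal manipulation of the Weyl commutation relation is the identification $(C^\perp)^\perp = C$, which rests on $\mathbb{F}_d^n$ being finite-dimensional with the non-degenerate inner product $\langle\cdot,\cdot\rangle$ fixed in the preliminaries; this is the one spot where the argument uses linear algebra rather than just bookkeeping of exponents of $\omega$, and it is the main (but very mild) obstacle in an otherwise routine verification.
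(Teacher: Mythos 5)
Your proof is correct and follows essentially the same route as the paper's: commutation is tested against the generators via the symplectic form $\star$ from Equations~\eqref{eqcomm}--\eqref{star}, with the double annihilator $(C^\perp)^\perp=C$ handling the generators $D_{a,a}$, $a\in C^\perp$, and orthogonality to $C_1$ handling the generators $D_{b,0}$, $b\in C_1$. The only point worth flagging is that the raw computation $(y,0)\star(a,b)=-\langle y,b\rangle$ places the \emph{second} coordinate of $D_{a,b}$ in $C_1^\perp$, a convention detail that your phrase ``the appropriate coordinate'' glosses over and that the paper's own one-line proof treats equally loosely.
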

\begin{proof}
This follows from Equations ~\eqref{eqcomm} and \eqref{star} as, for any $x\in C^\perp$, we have $(x,x)\star(a,b)=0$ exactly when
$x\cdot (a- b)=0.$
% \begin{equation}
% $x\cdot a-x\cdot b=x\cdot(a-b)=0.$
% \end{equation}
Also, for any $y\in C_1$, we have $(y,0)\star(a,b)=0$ only when $y\cdot a=0$.
\end{proof}

Recall that a quantum stabilizer code is any joint eigenspace of the operators in its stabilizer. We will denote by $R_C^{C_1}$ the \textbf{reflexive stabilizer code} (RSC) with reflexive stabilizer $S_C^{C_1}$. One quickly notes that an RSC $R^{C_1}_C$ will encode $k=\text{dim}(C)-\text{dim}(C_1)$ logical qudits into an $n$ physical qudit system. 
A constructive form of reflexive stabilizer codes is given in Appendix~\ref{app: constructive}. 

The following theorem summarizes the errors sets that $R_C^{C_1}$ can correct; it is simply a rewording of Theorem~\ref{errorcorrecting} in terms of the LUC and error avoidance graphs. The details of the proof can be found in Appendix~\ref{app: lemmas}. 

\begin{thm}\label{graph correct}
Let $d=p^m$ for some prime $p$, $C_1\subset C\subset\mathbb{F}_d^n$ be linear subspaces, and let $\mathscr{E}$ be an error set. 
If the only edges common to both $G_\mathscr{E}$ and $G_C^{C_1}$ are % the loop at $\overline{0}$ or are simple edges 
incident to $\overline{0}$ or a vertex outside $C_1^\perp$, 
then the reflexive stabilizer code $R_C^{C_1}$ can correct any error $\mathscr{E}$. In short, if we have 
\begin{equation}\label{graph correct eqn}
E_C^{C_1}\cap {E}_\mathscr{E}\subseteq\{(a,b)\,|\,a\nin C_1^\perp\;\text{or}\;a=0\}.
\end{equation}
% \begin{equation}\label{graph correct eqn 1}
% E_C\cap \hat{E}_\mathscr{E}\subseteq\{(a,b)\,|\,a\nin C_1^\perp\;\text{or}\;a=0\}
% \end{equation}
% and 
% \begin{equation}\label{graph correct eqn 2}
% \mathscr{L}_\mathscr{E}\cap C^\perp_1=\emptyset. %\{0\}. 
% %\left.\Big\{E_1\inv E_2\, \right|\  E_1,E_2\in\mathscr{E},\,\kappa\in\{0,...,p-1\}\Big\}
% \end{equation}
\end{thm}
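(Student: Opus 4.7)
The plan is to reduce the graph condition to the algebraic criterion of Theorem~\ref{errorcorrecting}, which demands that every conjugate error $E \in \mathscr{E}^2$ satisfy either $E \in S_C^{C_1}$ or $E \notin (S_C^{C_1})^\perp$. First I would pin down both subgroups in terms of the labels of the generators $D_{a,b}$. Expanding products of the generators in Equation~\eqref{refl stab} using Equation~\eqref{Dab props} shows that every element of $S_C^{C_1}$ is, up to a scalar, of the form $D_{x+y,\,x}$ with $x \in C^\perp$ and $y \in C_1$; equivalently, $D_{a,b} \in S_C^{C_1}$ iff $b \in C^\perp$ and $a - b \in C_1$. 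Lemma~\ref{sperp} gives the parallel description of $(S_C^{C_1})^\perp$.

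Next I would match these labels to edges of $G_C^{C_1}$. A pair $(a,b)$ with $a - b \in C$ is either a non-loop edge of the LUC graph (if $a \neq b$) or a loop $(a,a)$, which sits in $E_C^{C_1}$ precisely when $a \in C_1^\perp$. The definitions are set up so that these graph conditions dovetail with the centralizer: whenever $D_{a,b}$ lies in $(S_C^{C_1})^\perp$, its label forces $(a,b) \in E_C^{C_1}$. Consequently any candidate obstruction $E = \omega^\kappa D_{a,b} \in \mathscr{E}^2 \cap (S_C^{C_1})^\perp$ produces an edge of $E_C^{C_1} \cap E_\mathscr{E}$, since the assumption $E \in \mathscr{E}^2$ places $(a,b)$ in $E_\mathscr{E}$.

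The final step closes the loop using the hypothesis~\eqref{graph correct eqn}. Given such an $E$, the edge $(a,b)$ lies in the intersection, so the hypothesis forces $a = \overline{0}$ or $a \notin C_1^\perp$. A case split on these two alternatives, combined with the centralizer constraints $a - b \in C$ and the orthogonality condition from Lemma~\ref{sperp}, should collapse each surviving possibility onto the label description of $S_C^{C_1}$, yielding $E \in S_C^{C_1}$ and letting Theorem~\ref{errorcorrecting} finish the argument. The hardest part will be this case analysis: it is where the choice of loop set $\mathscr{L}_{C_1} = C_1^\perp$ earns its keep, since it is precisely this choice that aligns the loops retained in $G_C^{C_1}$ with the loops contributed by the centralizer, and the nested duality between $C_1, C, C^\perp, C_1^\perp$ is what renders the two alternatives in the hypothesis jointly exhaustive.
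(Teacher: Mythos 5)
Your overall route is the same as the paper's: reduce to Theorem~\ref{errorcorrecting}, describe $S_C^{C_1}$ and its centralizer explicitly (your identification of $S_C^{C_1}$ as the scalars times $D_{x+y,x}$ with $x\in C^\perp$, $y\in C_1$ is correct, and Lemma~\ref{sperp} gives the centralizer), and observe that any conjugate error lying in $\big(S_C^{C_1}\big)^\perp$ forces its label pair into $E_C^{C_1}\cap E_\mathscr{E}$. This is precisely the translation the paper carries out in Appendix~\ref{app: lemmas} as Lemmas~\ref{graph correct lemma 2} and~\ref{graph correct lemma 1}, so the architecture is not new.

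The gap is in the final case split, which you defer with ``should collapse.'' Once you know $E=\omega^\kappa D_{a,b}\in\big(S_C^{C_1}\big)^\perp$, Lemma~\ref{sperp} already gives $a\in C_1^\perp$, so the alternative $a\nin C_1^\perp$ is vacuous and you are left with $a=\overline 0$. But $\omega^\kappa D_{\overline 0,b}$ with $b\in C\setminus\{\overline 0\}$ does \emph{not} lie in $S_C^{C_1}$ unless $b\in C^\perp$ and $-b\in C_1$, which nothing in hypothesis~\eqref{graph correct eqn} supplies; such an error sits inside the centralizer and outside the stabilizer, so Theorem~\ref{errorcorrecting} cannot be invoked and the case analysis does not close. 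The only instance that genuinely collapses onto $S_C^{C_1}$ is the loop $(\overline 0,\overline 0)$, i.e.\ the trivial conjugate error $\mathbb{1}$. The paper's proof is arranged to sidestep exactly this: Lemma~\ref{graph correct lemma 2} characterizes the \emph{non-trivial} conjugate errors outside the centralizer and exempts only $\mathbb{1}$, and the worked examples treat simple edges incident to $\overline 0$ (e.g.\ the edge $(\overline 0,\overline 1)$ in Example~\ref{ex2}) via the ``endpoint outside $C_1^\perp$'' clause rather than the ``$a=\overline 0$'' clause. To repair your argument you must either restrict the exemption $a=\overline 0$ to the loop at $\overline 0$, or separately verify for any simple edge $(\overline 0,b)$ in the intersection that $D_{\overline 0,b}$ and $D_{b,\overline 0}$ land in $S_C^{C_1}$ or outside the centralizer.
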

% \begin{proof}
% This follows from Lemma~\ref{graph correct lemma 2} in Appendix~\ref{app: lemmas}. 
% \end{proof}

% One interpretation of Theorem~\ref{graph correct} is the following: 
% the set of errors that are correctable for a reflexive stabilizer code arising from a LUC, is simply edge avoidance. This can be seen from Theorem~\ref{graph correct} as $G_C$ must avoid any edge in the set $E_\mathscr{E}$ or we need to find a $C_1\subset C$ such that $a-b\nin C_1^\perp$, for this reason we will call this set the \textbf{avoidance set}. \DW{Remove the terminology. Play lip service to the difference set and avoidance.}  

The following corollary is a simplification of Theorem~\ref{graph correct} which is easier to verify. 
% Example~\ref{dw1} illustrates its utility. 

\begin{cor}\label{cor: graph correct}
Let $d=p^m$, $C_1\subset C\subset\mathbb{F}_d^n$, and let $\mathscr{E}$ be as in Theorem~\ref{graph correct}. If 
\begin{equation}\label{cor eqn}
E_C^{C_1}\cap E_\mathscr{E}=\{(\overline{0},\overline{0})\}  
\end{equation}
then $R_C^{C_1}$ can correct any single error from $\mathscr{E}$. 
\end{cor}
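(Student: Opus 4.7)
The plan is to deduce Corollary \ref{cor: graph correct} as an immediate specialization of Theorem \ref{graph correct}, since the hypothesis of the corollary is strictly stronger than the hypothesis of the theorem, so no new ideas are required beyond verifying a set containment.

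First I would unpack the containment explicitly: the singleton $\{(\overline{0}, \overline{0})\}$ is contained in the set $\{(a,b)\mid a\nin C_1^\perp\text{ or } a=\overline{0}\}$ appearing in \eqref{graph correct eqn}, because the only edge in the singleton is the trivial loop at $\overline{0}$, whose first coordinate $\overline{0}$ satisfies the disjunct ``$a=\overline{0}$'' (regardless of whether $\overline{0}\in C_1^\perp$, which in fact it always is). Hence the assumption $E_C^{C_1}\cap E_\mathscr{E}=\{(\overline{0},\overline{0})\}$ implies condition \eqref{graph correct eqn} vacuously, and Theorem \ref{graph correct} then gives that $R_C^{C_1}$ corrects every error in $\mathscr{E}$.

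I do not anticipate any real obstacle. The content of the corollary is not a new assertion about the reflexive stabilizer code but rather a usability statement: in practice, checking that the only shared edge between $G_C^{C_1}$ and $G_\mathscr{E}$ is the identity loop is considerably easier than determining, edge by edge, whether the first coordinate lies outside $C_1^\perp$, and this is the form of the condition that one would typically use when designing codes against a pre-determined error set $\mathscr{E}$.
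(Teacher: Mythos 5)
Your proposal is correct and is exactly the route the paper takes: the corollary is presented as an immediate specialization of Theorem~\ref{graph correct}, since the trivial loop $(\overline{0},\overline{0})$ satisfies the disjunct $a=\overline{0}$ in Equation~\eqref{graph correct eqn}, so the corollary's hypothesis implies the theorem's. (One minor wording quibble: the implication holds by a trivial verification of the containment, not ``vacuously.'')
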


By Corollary~\ref{cor: graph correct}, finding a reflexive stabilizer code capable of correcting an error set $\mathscr{E}$ is as simple as finding a connecting set $C$ such that $G_C$ avoids the edges of $G_\mathscr{E}$. This is illustrated in the following example. It is worth noting that this condition, as opposed to that in Theorem~\ref{graph correct}, does not always allow for an encoding of the maximum number of physical qudits.

%--------------EXAMPLE --------------
\begin{ex}\label{dw1}
Consider the error set $\mathscr{E}=\{\mathbb{1}, D_{e_1,e_2}, D_{e_3,e_3}, D_{e_2,e_1}\}$ on the system of three qubits discussed in Example~\ref{error graph example}. 
% The graph $G_\mathscr{E}$ is shown in Figure~\ref{fig: dw1}.
% (Also see Figure~\ref{fig: dw1}.) 
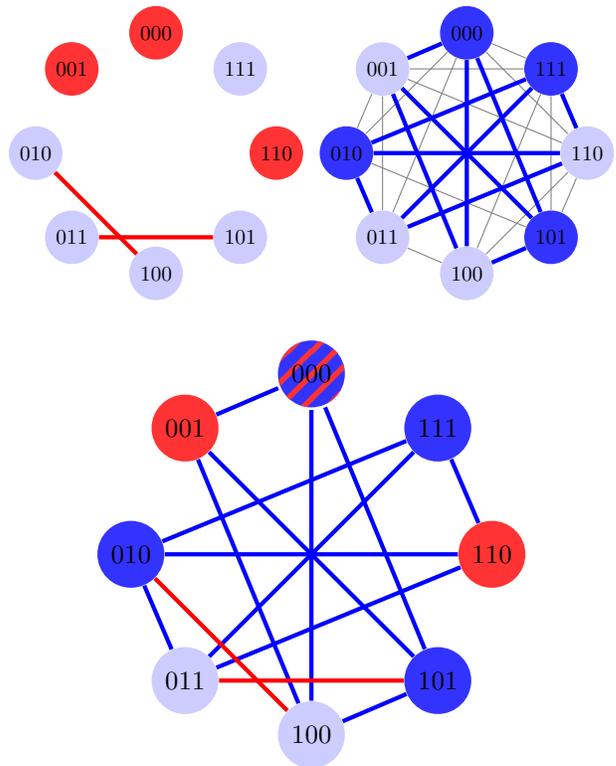
\begin{figure}
\begin{center}
% \begin{tikzpicture}
%   [scale=.4,auto=left,every node/.style={circle,scale=0.8,fill=blue!20}]
 
%  \node (n0) at  (0,4)  {000};
%  \node (n1) at  (-2.8,2.8) {001};
%  \node (n2) at (-4,0) {010};

%  \node (n3) at (-2.8,-2.8) {011};

%  \node (n4) at (0,-4) {100};
%  \node (n5) at (2.8,-2.8) {101};
 
%  \node (n6) at (4,0) {110};

% \node (n7) at (2.8,2.8) {111};
 
%   %0-edges
%   \foreach \from/\to in {n0/n1,n0/n2,n0/n3,n0/n4,n0/n5,n0/n6,n0/n7}
%     \draw[gray] (\from) -- (\to);
%     %1-edges
%      \foreach \from/\to in {n1/n2,n1/n3,n1/n4,n1/n5,n1/n6,n1/n7}
%     \draw[gray] (\from) -- (\to);
%      %2-edges
%      \foreach \from/\to in {n2/n3,n2/n5,n2/n6,n2/n7}
%     \draw[gray] (\from) -- (\to);
%      %3-edges
%   \foreach \from/\to in {n3/n4,n3/n6,n3/n7}
%     \draw[gray] (\from) -- (\to);
%     %4-edges
%   \foreach \from/\to in {n4/n5,n4/n6,n4/n7}
%     \draw[gray] (\from) -- (\to);
%         %5-edges
%   \foreach \from/\to in {n5/n6,n5/n7}
%     \draw[gray] (\from) -- (\to);
%         %6-edges
%   \foreach \from/\to in {n6/n7}
%     \draw[gray] (\from) -- (\to);

%         %cayley subgraph 1
%   \foreach \from/\to in {n0/n1,n0/n5,n1/n5,n0/n4,n1/n4,n4/n5}
%     \draw[blue,ultra thick] (\from) -- (\to);
    
%          %cayley subgraph 2
%   \foreach \from/\to in {n7/n6,n7/n2,n6/n2,n7/n3,n6/n3,n2/n3}
%     \draw[blue,ultra thick] (\from) -- (\to);
    
%     %\path (n0) edge [loop above] node {} (n0);

% \end{tikzpicture}

\begin{tikzpicture}
  [scale=.4,auto=left,every node/.style={circle,scale=0.8,fill=blue!20}]
 
 \node[fill=red!80] (n0) at  (0,4)  {000};
 \node[fill=red!80] (n1) at  (-2.8,2.8) {001};
 \node (n2) at (-4,0) {010};

 \node (n3) at (-2.8,-2.8) {011};

 \node (n4) at (0,-4) {100};
 \node (n5) at (2.8,-2.8) {101};
 
 \node[fill=red!80] (n6) at (4,0) {110};

\node (n7) at (2.8,2.8) {111};

  \foreach \from/\to in {n2/n4,n3/n5}
    \draw[red,ultra thick] (\from) -- (\to);
    
    %\path (n0) edge [loop above] node {} (n0);

\end{tikzpicture}\hspace{1mm}
\begin{tikzpicture}
  [scale=.4,auto=left,every node/.style={circle,scale=0.8,fill=blue!20}]
 
 \node[fill=blue!80] (n0) at  (0,4)  {000};
 \node (n1) at  (-2.8,2.8) {001};
 \node[fill=blue!80] (n2) at (-4,0) {010};

 \node (n3) at (-2.8,-2.8) {011};

 \node (n4) at (0,-4) {100};
 \node[fill=blue!80] (n5) at (2.8,-2.8) {101};
 
 \node (n6) at (4,0) {110};

\node[fill=blue!80] (n7) at (2.8,2.8) {111};
 
  %0-edges
  \foreach \from/\to in {n0/n1,n0/n2,n0/n3,n0/n4,n0/n5,n0/n6,n0/n7}
    \draw[gray] (\from) -- (\to);
    %1-edges
     \foreach \from/\to in {n1/n2,n1/n3,n1/n4,n1/n5,n1/n6,n1/n7}
    \draw[gray] (\from) -- (\to);
     %2-edges
     \foreach \from/\to in {n2/n3,n2/n5,n2/n6,n2/n7}
    \draw[gray] (\from) -- (\to);
     %3-edges
  \foreach \from/\to in {n3/n4,n3/n6,n3/n7}
    \draw[gray] (\from) -- (\to);
    %4-edges
  \foreach \from/\to in {n4/n5,n4/n6,n4/n7}
    \draw[gray] (\from) -- (\to);
        %5-edges
  \foreach \from/\to in {n5/n6,n5/n7}
    \draw[gray] (\from) -- (\to);
        %6-edges
  \foreach \from/\to in {n6/n7}
    \draw[gray] (\from) -- (\to);

        %cayley subgraph 1
  \foreach \from/\to in {n0/n1,n0/n5,n1/n5,n0/n4,n1/n4,n4/n5}
    \draw[blue,ultra thick] (\from) -- (\to);
    
         %cayley subgraph 2
  \foreach \from/\to in {n7/n6,n7/n2,n6/n2,n7/n3,n6/n3,n2/n3}
    \draw[blue,ultra thick] (\from) -- (\to);
    
    %\path (n0) edge [loop above] node {} (n0);

\end{tikzpicture}

\vspace{5mm}
\begin{tikzpicture}
  [scale=.6,auto=left,every node/.style={circle,fill=blue!20}]
   \node[preaction={fill=blue!80}, pattern=north east hatch, pattern color=red!80, hatch distance=10pt, hatch thickness=2pt] (n0) at (0,4) {000};
 \node[fill=red!80] (n1) at  (-2.8,2.8) {001};
 \node[fill=blue!80] (n2) at (-4,0) {010};
 \node (n3) at (-2.8,-2.8) {011};
 \node (n4) at (0,-4) {100};
 \node[fill=blue!80] (n5) at (2.8,-2.8) {101};
 \node[fill=red!80] (n6) at (4,0) {110};
\node[fill=blue!80] (n7) at (2.8,2.8) {111};
 
% %0-edges
%   \foreach \from/\to in {n0/n1,n0/n2,n0/n3,n0/n4,n0/n5,n0/n6,n0/n7}
%     \draw[gray] (\from) -- (\to);
%     %1-edges
%      \foreach \from/\to in {n1/n2,n1/n3,n1/n4,n1/n5,n1/n6,n1/n7}
%     \draw[gray] (\from) -- (\to);
%      %2-edges
%      \foreach \from/\to in {n2/n3,n2/n5,n2/n6,n2/n7}
%     \draw[gray] (\from) -- (\to);
%      %3-edges
%   \foreach \from/\to in {n3/n4,n3/n6,n3/n7}
%     \draw[gray] (\from) -- (\to);
%     %4-edges
%   \foreach \from/\to in {n4/n5,n4/n6,n4/n7}
%     \draw[gray] (\from) -- (\to);
%         %5-edges
%   \foreach \from/\to in {n5/n6,n5/n7}
%     \draw[gray] (\from) -- (\to);
%         %6-edges
%   \foreach \from/\to in {n6/n7}
%     \draw[gray] (\from) -- (\to);
    
        %cayley subgraph 1
  \foreach \from/\to in {n0/n1,n0/n5,n1/n5,n0/n4,n1/n4,n4/n5}
    \draw[blue,ultra thick] (\from) -- (\to);
    
         %cayley subgraph 2
  \foreach \from/\to in {n7/n6,n7/n2,n6/n2,n7/n3,n6/n3,n2/n3}
    \draw[blue,ultra thick] (\from) -- (\to);
    
      \foreach \from/\to in {n2/n4,n3/n5}
    \draw[red,ultra thick] (\from) -- (\to);

\end{tikzpicture}
\end{center}
\vspace{-.2in}
\caption{(Top Left) The error avoidance graph $G_\mathscr{E}$ for the error set $\mathscr{E}=\{\mathbb{1}, D_{e_1,e_2}, D_{e_3,e_3}, D_{e_2,e_1}\}$. Red nodes indicate loops from  $\mathscr{L}_\mathscr{E}$. (See Example~\ref{error graph example}.) 
% Edges to avoid (from $\hat E_\mathscr{E}$) are shown in red, while the red-shaded nodes indicate non-trivial loops (from $\mathscr{L}_\mathscr{E}$).
(Top Right) The LUC graph $G_C^{C_1}$ from Example~\ref{dw1} with $C=\{000,100,001,101\}$ and $C_1=\langle 101\rangle$ shown in blue. Gray edges indicate the unused simple edges  obtained from the complement of $G_\mathscr{E}$. 
(Bottom) The two graphs $G_\mathscr{E}$ and $G_C^{C_1}$ drawn together. Notice that the only edge in common is the loop at $\overline 0$, satisfying the condition of Corollary~\ref{cor: graph correct}. }
\label{fig: dw1}
\end{figure}
We must first choose a connecting set $C$ such that the simple edges, $\hat E_C$, of $G_C$ avoid those in $G_\mathscr{E}$. One possible option is 
\begin{equation}
C=\{000,100,001,101\}.
\end{equation}
Next, we choose a subspace $C_1\subseteq C$ so that the loop set, $\mathscr{L}_{C_1}$, of $G_C^{C_1}$ intersects $\mathscr{L}_\mathscr{E}$ only at $(\overline 0, \overline 0)$. 
Thus Equation~\eqref{cor eqn} is satisfied and we have that $R_C^{C_1}$ will correct any single error from the given error set $\mathscr{E}$. (See Figure~\ref{fig: dw1}.)  
% \DW{Should we have less exposition here since we haven't introduced the heuristic algorithm yet, or do we think this is helpful leading into the algorithm? I am leaning toward the latter. }
\end{ex}

% In Section~\ref{heu}, we give step-by-step instructions for finding a reflexive stabilizer code that corrects against a given error set.
%%%%%%%%%%%%%%%%%%%%%%%%%%%%%%%%%%%%%%

In Appendix~\ref{app: benchmark} we benchmark reflexive stabilizer codes against the well-known CSS codes. Specifically, we show that reflexive stabilizer codes have the same signle qubit error correction capabilities as their CSS counterparts.  
%are equally as good as CSS codes at correcting single qubit and, more generally, qudit errors.
Moreover, we give a one-to-one correspondence between them.

\section{Heuristic algorithm}\label{heu}
In this section we will lay out the concise steps for a heuristic algorithm to build a reflexive stabilizer code which will correct a given error set according to Theorem~\ref{graph correct}. 
% Specifically, for this section, we will fix an error set $\mathscr{E}\subseteq E_n$ at the beginning and look for an extended RSC which can correct those errors. 
Briefly, the steps are as follows: (S0) Fix an error set $\mathscr{E}$. (S1) Construct its error avoidance graph $G_\mathscr{E}$. (S2) Find a $C_0$ whose LUC graph $G_{C_0}$ avoids the simple edges of $G_\mathscr{E}$. (S3) If possible, extend $C_0$ to a subspace $C$ and simultaneously choose a subcode $C_1\subset C$ that together satisfy Equation~\eqref{graph correct eqn}. 
Theorem~\ref{graph correct} then gives confirmation that the code $R_C^{C_1}$ can correct any single error from $\mathscr{E}$.

\vspace{2mm}
\noindent
\textbf{Step 0:} \textbf{Start with an error set} $\mathbf{\mathscr{E}}$

One major benefit of RSCs is that they provide a means of finding a code that correct against a pre-determined error set associated to a noisy channel. This is in contrast to most codes in the literature where one starts with a code and then searches for the errors it it corrects. 
In this manuscript, we choose error sets that are either convenient for theoretical analysis or are illustrative of the concepts we are developing. 
However, future works will focus on error sets that appear in physical quantum computers such as those found in \cite{correlated}.

\vspace{2mm}
\noindent
\textbf{Step 1:} \textbf{Build the error avoidance graph} $\mathbf{G_{\mathscr{E}}}$

The error avoidance graph $G_{\mathscr{E}}$, as defined in Section~\ref{reflexive}, encodes the conjugate errors produced by a noisy channel as edges. Besides the certain conditions outline earlier, we wish to avoid these edges with the LUC graph of a reflexive stabilizer code.  

\vspace{2mm}
\noindent
\textbf{Step 2: Find a LUC graph} $\mathbf{G_{C_0}}$ \textbf{which avoids} % the simple edges of} $\mathbf{G_{\mathscr{E}}}$
$\mathbf{\hat E_{\mathscr{E}}}$

Find a maximal connecting set $C_0$ whose LUC graph $G_{C_0}$ avoids the simple edges $\hat E_{\mathscr{E}}$ of $G_\mathscr{E}$. 
% The graphs, $G_{C_0}$ and $G_\mathscr{E}$, trivially satisfy the left-hand side of Equation~\eqref{cor eqn}. However, Theorem~\ref{graph correct} allows for a weakening of this condition, thereby allowing for a higher rate of error correction. Extending $C_0$ for this purpose is addressed in the next step. 
Starting with this LUC graph, one can obtain a lower bound on the rate of error correction by exploring possible subspaces of $C_0$ that satisfy Equation~\eqref{cor eqn}. However, Theorem~\ref{graph correct} allows for a weakening of this condition, thereby allowing for a higher rate of error correction. Extending $C_0$ for this purpose is addressed in the next step.
% By Corollary~\ref{cor: graph correct}, it would be sufficient to find a subset $C_1\subset C_0$ such that $\mathscr{L}_\mathscr{E}\cap C_1^\perp=\emptyset$ in order to correct single errors from $\mathscr{E}$.
% \DW{Should checking for $C_1\subset C_0$ be a first step (of step 3); i.e.\ before extending $C_0$? If that works, it seems easier.}
Sometimes it is enough to not extend $C_0$, see e.g.\ Examples~\ref{dw1}.

\vspace{2mm}
\noindent
\textbf{Step 3: Loop avoidance and extensions}

% Alternatively, we can make use of the weaker conditions in Theorem~\ref{graph correct} to extend $C_0$ before choosing the loop avoidance subset $C_1$.  In doing so, one obtains a more efficient code capable of encoding a larger number of logical qudits into the fixed number of physical qudits (determined by $\mathscr{E}$). 
We attempt to find a linear extension $C\supset C_0$ and a linear subspace $C_1\subset C$ such that the conditions of Theorem~\ref{graph correct} are satisfied. In particular, we need $C_1^\perp\cap\mathscr{L}_\mathscr{E}=\{\overline 0\}$ and $\hat E_C\cap \hat E_\mathscr{E}\subset\{(a,b)\,|\,a\nin C_1^\perp\}$.

\vspace{2mm}
\noindent
\textbf{Conclusion}

Finally, the reflexive stabilizer code $R_C^{C_1}$ can be defined using the $C_1$ and $C$ from Step 3. 
Following Theorem~\ref{graph correct}, $R_C^{C_1}$ will correct any single error from the error set $\mathscr{E}$. 

\vspace{2mm}
\noindent
We apply this heuristic algorithm in the following section to obtain two instances of optimal encodings. %First, we introduce the error set $\mathscr{E}=\{\mathbb{1},D_{\overline{1},\overline{0}}, D_{\overline{0},\overline{1}},D_{\overline{1},\overline{1}}\}$ where $\overline{1}$ is the all ones vector. Note that $\mathscr{E}^2=\mathscr{E}$ and $\mathscr{L}_{\mathscr{E}}=\{\overline{1}\}$. 

\section{Optimal encoding}
In this section we examine two instances of optimal encodings using reflexive stabilizer codes. The first example is of a channel of qubits inflicted by fully correlated noise. The second example encodes a single qudit on a four state system into four qudits inflicted by single qudit errors. This code is perfect in the same sense as the Perfect Code developed in \cite{perfect} which embeds a single qubit into a five qubit system.

\subsection{Fully correlated noise}\label{fullycorrelated}

We now present our first example to illustrate the power of this novel approach to quantum error correction, and we do so in the case of qubits. This first case we present will be concerned with fully correlated noise, i.e. characterized by the error set $\mathscr{E}=\{\mathbb{1},D_{\overline{1},\overline{0}},D_{\overline{1},\overline{1}}, D_{\overline{0},\overline{1}}\}$. 
These operators 
\begin{equation}\label{fully correlated}
    D_{\overline{1},\overline{0}}=X^{\otimes n},\;\;\;\;
    D_{\overline{1},\overline{1}}=Y^{\otimes n},\;\; \text{and} \;\;
    D_{\overline{0},\overline{1}}=Z^{\otimes n}
\end{equation}
we use the term fully correlated, as whenever a flip, phase, or phase-flip errors occurs, it does so on all qubits simultaneously. In \cite{full} the authors show that a physical system of $n>2$ qubits can protect against fully correlated noise with a maximum number $n-1$ or $n-2$ logical qubits when $n$ is odd or even, respectively, a result which improved on the a similar encoding from \cite{newfull}. With the use of the powerful new tool of reflexive stabilizer codes we show that this previously thought upper-limit, in the even case, of $n-2$ logical qubits can be reduced to $n-1$.
One argument for the physical realization of such noise, is that on a most practical qubit-chips the distance between qubits is often measured in the micrometers, while a likely candidate for environmental noise such as an electromagnetic wave has a wavelength on the order of millimeters. It is this disparity of distances that make it natural to assume that all qubits on the chip are affected by the same error simultaneously. 
%We model this noise by operators $X^{\otimes n}$, $Y^{\otimes n}$, and $Z^{\otimes n}$. That is to say, we consider the fully correlated error set $\mathscr{E}=\{\mathbb{1},D_{\overline{1},\overline{0}}, D_{\overline{0},\overline{1}},D_{\overline{1},\overline{1}}\}$.   
% In particular, we show that reflexive stabilizer codes can be used to obtain the optimal encoding as characterized by \cite{full}. Specifically, we will show for a channel with $n>2$ qubits can encode codewords of length $n-1$ when $n$ is odd, and $n-2$ when $n$ is even. 

As our novel approach shows improvement on a previously thought maximum encoding we take care to give illustrations of the graphs involved for both odd and even $n$ in Figure~\ref{fig: ex2}. To begin by building the avoidance graph for our fully correlated noise, note that the conjugate errors $\mathscr{E}^2=\mathscr{E}$, and hence the edge set of $G_\mathscr{E}$, is independent of $n$. Importantly, the non-looped edges and the loop set for the fully correlated error set are given by 
\begin{equation}\label{fully correlated edges}
    \hat E_\mathscr{E}=\{(\overline{0},\overline{1})\}\;\;\text{and}\;\; 
    \mathscr{L}_{\mathscr{E}}=\{\overline{1}\},
\end{equation} 
respectively. 

\begin{ex}\label{ex2}
We consider any integer $n>2$, and to emphasize the application of the heuristic algorithm, we label the individual steps. 

(S0) Fix the fully correlated error set $\mathscr{E}$ (defined above).
(S1) We can construct the error avoidance graph $G_\mathscr{E}$ by making use of Equation~\eqref{fully correlated edges}. (We illustrate $G_\mathscr{E}$ for $n=3$ qubits at the top left of Figure~\ref{fig: ex2}.) % with edges shown in red. 
(S2) Next, our goal is to find a LUC graph which can avoid the simple edges of $G_\mathscr{E}$. This can be accomplished with the connecting set $C_0=\lan e_1,\ldots, e_{n-1}\ran$. % The LUC graph $G_{C_0}$ for $n=3$; i.e.\ when $C_0=\lan e_1, e_{2}\ran$, is shown in with black edges at the top of Figure~\ref{fig: ex2}.
% The simple edges of the two graphs are clearly disjoint, as indicated by Figure~\ref{fig: ex2} for $n=3$. 

Notice that we must extend $C_0$ in order to obtain the maximal encoding of $n-1$ logical qubits. Indeed this is true regardless of the $C_0$ chosen in this step, as dimension arguments would then force $C_1=\{\overline{0}\}$ which violates the condition %$\mathscr{L}_\mathscr{E}\cap \mathscr{L}_{C_1}=(\overline{0},\overline{0})$
of Theorem~\ref{graph correct}. 
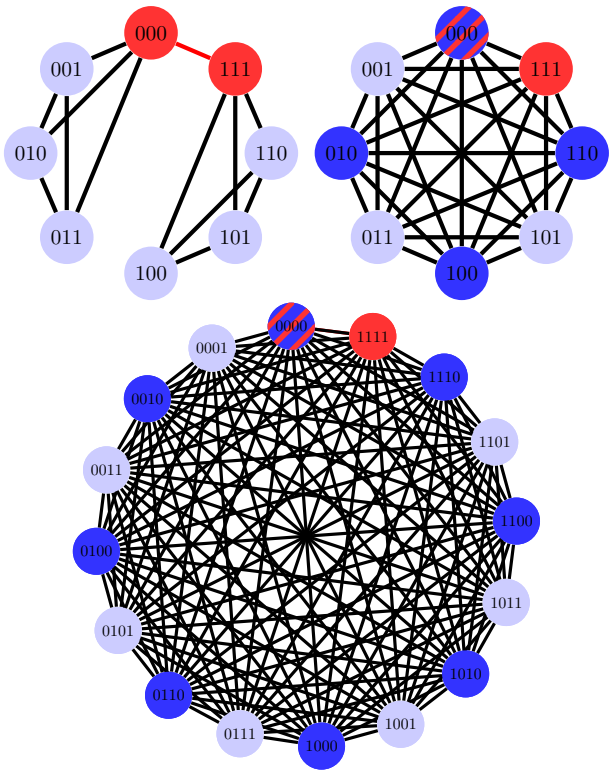
\begin{figure}[h]
\begin{center}

% \begin{tikzpicture}
%   [scale=.4,auto=left,every node/.style={circle,fill=blue!20, scale=0.8}]
 
%  \node[fill=red!80] (n0) at  (0,4)  {000};
%  \node (n1) at  (-2.8,2.8) {001};
%  \node (n2) at (-4,0) {010};

%  \node (n3) at (-2.8,-2.8) {011};

%  \node (n4) at (0,-4) {100};
%  \node (n5) at (2.8,-2.8) {101};
 
%  \node (n6) at (4,0) {110};

% \node[fill=red!80] (n7) at (2.8,2.8) {111};

% % G_E edges
%   \foreach \from/\to in {n0/n7}
%     \draw[ultra thick,red] (\from) -- (\to);
    
% %   \foreach \from/\to in {n0/n7,n1/n6,n2/n5,n3/n4}
% %     \draw[ultra thick,red] (\from) -- (\to);

% \end{tikzpicture}

\begin{tikzpicture}
  [scale=.4,auto=left,every node/.style={circle,fill=blue!20, scale=0.8}]
 
 \node[fill=red!80] (n0) at  (0,4)  {000};
 \node (n1) at  (-2.8,2.8) {001};
 \node (n2) at (-4,0) {010};

 \node (n3) at (-2.8,-2.8) {011};

 \node (n4) at (0,-4) {100};
 \node (n5) at (2.8,-2.8) {101};
 
 \node (n6) at (4,0) {110};

\node[fill=red!80] (n7) at (2.8,2.8) {111};

% G_E edges
  \foreach \from/\to in {n0/n7}
    \draw[ultra thick,red] (\from) -- (\to);
    
%   \foreach \from/\to in {n0/n7,n1/n6,n2/n5,n3/n4}
%     \draw[ultra thick,red] (\from) -- (\to);
 
% G_C Edges 
  \foreach \from/\to in {n0/n1,n0/n2,n0/n3,n1/n2,n1/n3,n2/n3}
    \draw[ultra thick] (\from) -- (\to);
    
    \foreach \from/\to in {n4/n5,n4/n6,n4/n7,n5/n6,n5/n7,n6/n7}
    \draw[ultra thick] (\from) -- (\to);

\end{tikzpicture}\hspace{1mm}
\begin{tikzpicture}
  [scale=.4,auto=left,every node/.style={circle,fill=blue!20, scale=0.8}]
 
 \node[preaction={fill=blue!80}, pattern=north east hatch, pattern color=red!80, hatch distance=10pt, hatch thickness=2pt] (n0) at  (0,4)  {000};
 \node (n1) at  (-2.8,2.8) {001};
 \node[fill=blue!80] (n2) at (-4,0) {010};

 \node (n3) at (-2.8,-2.8) {011};

 \node[fill=blue!80] (n4) at (0,-4) {100};
 \node (n5) at (2.8,-2.8) {101};
 
 \node[fill=blue!80] (n6) at (4,0) {110};

\node[fill=red!80] (n7) at (2.8,2.8) {111};

  \foreach \from/\to in {n0/n1,n0/n2,n0/n3,n1/n2,n1/n3,n2/n3}
    \draw[ultra thick] (\from) -- (\to);
    
    \foreach \from/\to in {n4/n5,n4/n6,n5/n6}
    \draw[ultra thick] (\from) -- (\to);
    
      \foreach \from/\to in {n0/n4,n0/n5,n0/n6,n1/n4,n1/n5,n1/n6,n2/n4,n2/n5,n2/n6,n3/n4,n3/n5,n3/n6}
    \draw[ultra thick] (\from) -- (\to);
    
    \foreach \from/\to in {n0/n7, n1/n7,n2/n7,n3/n7,n4/n7,n5/n7,n6/n7}
    \draw[ultra thick] (\from) -- (\to);

\end{tikzpicture}

\begin{tikzpicture}
  [scale=.7,auto=left,every node/.style={circle,fill=blue!20, scale=.6}]

 \node (n1) at (-1.79229,3.57599)  {0001};
 \node[fill=blue!80] (n2) at (-3.02,2.612)  {0010};

 \node (n3) at (-3.796,1.261)  {0011};

 \node[fill=blue!80] (n4) at (-3.989,-0.287393)  {0100};
 
 \node (n5) at (-3.57599,-1.79229) {0101};
 
 \node[fill=blue!80] (n6) at (-2.6179,-3.024)   {0110};

\node (n7) at (-1.26126,-3.756)  {0111};

\node[fill=blue!80] (n8) at (0.287,-3.98966)  {1000};

\node (n9) at (1.79229,-3.57599)  {1001};

\node[fill=blue!80] (n10) at (3.02434,-2.6179) {1010};

\node (n11) at (3.79595,-1.26126)  {1011};

\node[fill=blue!80] (n12) at (3.98966,0.287393)  {1100};

\node (n13) at (3.57599,1.79229)  {1101};

\node[fill=blue!80] (n14) at (2.6179,3.02434)  {1110};

\node[fill=red!80] (n15) at (1.26126,3.79595) {1111};

 \node[preaction={fill=blue!80}, pattern=north east hatch, pattern color=red!80, hatch distance=10pt, hatch thickness=2pt] (n0) at (-0.287393,3.98966)   {0000};

% G_E edges
  \foreach \from/\to in {n0/n15}
    \draw[very thick,red] (\from) -- (\to);
 
% G_C Edges 
  \foreach \from/\to in {n0/n1,n0/n2,n0/n3,n0/n4,n0/n5,n0/n6,n0/n7,n1/n2,n1/n3,n1/n4,n1/n5,n1/n6,n1/n7,n2/n3,n2/n4,n2/n5,n2/n6,n2/n7,n3/n4,n3/n5,n3/n6,n3/n7,n4/n5,n4/n6,n4/n7,n5/n6,n5/n7,n6/n7}
    \draw[very thick] (\from) -- (\to);
      \foreach \from/\to in 
      {n0/n8,n0/n9,n0/n10,n0/n11,n0/n12,n0/n13,n0/n14,n0/n15,
      n1/n8,n1/n9,n1/n10,n1/n11,n1/n12,n1/n13,n1/n14,n1/n15,
      n2/n8,n2/n9,n2/n10,n2/n11,n2/n12,n2/n13,n2/n14,n2/n15,
      n3/n8,n3/n9,n3/n10,n3/n11,n3/n12,n3/n13,n3/n14,n3/n15,
      n4/n8,n4/n9,n4/n10,n4/n11,n4/n12,n4/n13,n4/n14,n4/n15,
      n5/n8,n5/n9,n5/n10,n5/n11,n5/n12,n5/n13,n5/n14,n5/n15,
      n6/n8,n6/n9,n6/n10,n6/n11,n6/n12,n6/n13,n6/n14,n6/n15,
      n7/n8,n7/n9,n7/n10,n7/n11,n7/n12,n7/n13,n7/n14,n7/n15}
    \draw[very thick] (\from) -- (\to);
    
    \foreach \from/\to in {n8/n9,n8/n10,n8/n11,n8/n12,n8/n13,n8/n14,n8/n15,n9/n10,n9/n11,n9/n12,n9/n13,n9/n14,n9/n15,n10/n11,n10/n12,n10/n13,n10/n14,n10/n15,n11/n12,n11/n13,n11/n14,n11/n15,n12/n13,n12/n14,n12/n15,n13/n14,n13/n15,n14/n15}
    \draw[very thick] (\from) -- (\to);

 \node (n1) at (-1.79229,3.57599)  {0001};
 \node[fill=blue!80] (n2) at (-3.02,2.612)  {0010};

 \node (n3) at (-3.796,1.261)  {0011};

 \node[fill=blue!80] (n4) at (-3.989,-0.287393)  {0100};
 
 \node (n5) at (-3.57599,-1.79229) {0101};
 
 \node[fill=blue!80] (n6) at (-2.6179,-3.024)   {0110};

\node (n7) at (-1.26126,-3.756)  {0111};

\node[fill=blue!80] (n8) at (0.287,-3.98966)  {1000};

\node (n9) at (1.79229,-3.57599)  {1001};

\node[fill=blue!80] (n10) at (3.02434,-2.6179) {1010};

\node (n11) at (3.79595,-1.26126)  {1011};

\node[fill=blue!80] (n12) at (3.98966,0.287393)  {1100};

\node (n13) at (3.57599,1.79229)  {1101};

\node[fill=blue!80] (n14) at (2.6179,3.02434)  {1110};

\node[fill=red!80] (n15) at (1.26126,3.79595) {1111};

 \node[preaction={fill=blue!80}, pattern=north east hatch, pattern color=red!80, hatch distance=10pt, hatch thickness=2pt] (n0) at (-0.287393,3.98966)   {0000};

\end{tikzpicture}
\end{center}
% \caption{(Top) Illustration of Example~\ref{ex2}.
% The vertex set for both $G_\mathscr{E}$ and $G_{C_0}$ is the computational basis represented by $\mathbb{F}_2^3$. The single edge of $G_\mathscr{E}$ is given in red and represent the conjugate errors present for the fully correlated error set $\mathscr{E}=\{\mathbb{1},D_{\overline{1},\overline{0}}, D_{\overline{0},\overline{1}},D_{\overline{1},\overline{1}}\}$. 
% The LUC graph $G_{C_0}$ with initial connection set $C_0=\lan e_1,e_2\ran$, described in Step 2 of the Heuristic Algorithm. 
\caption{Illustration of Example~\ref{ex2} when $n=3$.
(Top) The error avoidance graph $G_\mathscr{E}$ for the fully correlated error is shown in red. Overlaid in blue is the LUC graph of $G_{C_0}$ where $C_0=\langle e_1,e_2\rangle$. The simple edges are disjoint as specified by Step 2 of the heuristic algorithm. 
(Bottom) Following Step 3 of the heuristic algorithm, we extend $C_0$ to the set $C=\mathbb{F}_2^3$ and choose the subset $C_1=\langle e_n\rangle$. The resulting LUC graph $G_C^{C_0}$ contains edges between all eight distinct pairs of vertices. Moreover, the only common edge is incident to $1111$ which lies outside of $C_1^\perp$, satisfying the condition of Theorem~\ref{graph correct}. }
\label{fig: ex2}
\end{figure}

(S3) We extend $C_0$ by adding the remaining basis vector; i.e.\ setting $C=\mathbb{F}_2^n$. Simultaneously, we choose $C_1=\lan e_n\ran$. Then $\mathscr{L}_{C_1}=C_1^\perp$ is the set of all strings with the $n^\text{th}$ entry a zero -- shown as blue nodes in Figure~\ref{fig: ex2} for the case $n=3$ -- and is, moreover, disjoint from $\mathscr{L}_\mathscr{E}$ except at $\overline{0}$. Lastly, the single non-zero endpoint of $\hat E_\mathscr{E}$, $\overline{1}$, is not contained in $C_1^\perp$ (shown top right for $n=3$ and bottom for $n=4$ in Figure~\ref{fig: ex2}). Thus the desired properties of Theorem~\ref{graph correct} are satisfied. 

We conclude that the reflexive stabilizer code $R_C^{C_1}$ must correct any error from the fully correlated error set $\mathscr{E}$. Moreover, $R_C^{C_1}$ encodes 
$n-1$ logical qubits into the system of $n$ physical qubits, obtaining the maximum regardless if $n$ is odd or even. 
% $n-1=\text{dim}(C)-\text{dim}(C_1)$ logical qubits into the system of $n$ physical qubits, obtaining the maximum when $n$ is odd. 
\end{ex}

The previous example is summarized in the following theorem. 
\begin{thm}\label{thm: fullycorrelated odd}
For any $n>2$, there exists a reflexive stabilizer code $R_C^{C_1}$ encoding $n-1$ logical qubits into $n$ physical qubits that protects against the fully correlated error set. The RSC is constructed with $C=\mathbb{F}_2^n$ and $C_1=\langle e_n\rangle$. 
\end{thm}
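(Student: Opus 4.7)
The plan is to deduce this theorem as a direct application of Theorem~\ref{graph correct} to the fully correlated error set, following the construction laid out in Example~\ref{ex2}. The proof has two components: a dimension count to confirm that $n-1$ logical qubits are encoded, and a verification that Equation~\eqref{graph correct eqn} holds for the specific choices $C=\mathbb{F}_2^n$ and $C_1=\langle e_n\rangle$. Since everything reduces to checking a single containment of edge sets, I expect this to be short and mostly a bookkeeping argument.

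First I would note that $R_C^{C_1}$ encodes $k=\dim(C)-\dim(C_1)=n-1$ logical qubits, immediately giving the claimed encoding rate. Next I would invoke the description of the error avoidance graph for fully correlated noise from Equation~\eqref{fully correlated edges}, which gives $\hat{E}_\mathscr{E}=\{(\overline{0},\overline{1})\}$ and $\mathscr{L}_\mathscr{E}=\{\overline{0},\overline{1}\}$. Then I would describe $G_C^{C_1}$: since $C=\mathbb{F}_2^n$, Theorem~\ref{cayley thm} shows $G_C$ is complete on $\mathbb{F}_2^n$, so in particular $(\overline{0},\overline{1})\in\hat{E}_C$. For the loops, $C_1=\langle e_n\rangle$ yields $\mathscr{L}_{C_1}=C_1^\perp=\{v\in\mathbb{F}_2^n\mid v_n=0\}$, which excludes $\overline{1}$.

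With these ingredients in hand, the verification of Equation~\eqref{graph correct eqn} is immediate: the simple edge $(\overline{0},\overline{1})$ lies in $\hat{E}_C\cap\hat{E}_\mathscr{E}$ but is incident to $\overline{0}$, so it satisfies the $a=0$ branch of the allowed set; the loop at $\overline{1}$ lies in $\mathscr{L}_\mathscr{E}$ but not in $\mathscr{L}_{C_1}=C_1^\perp$, so it is absent from $E_C^{C_1}$ altogether; and the loop at $\overline{0}$ is common to both graphs but again satisfies $a=0$. Thus $E_C^{C_1}\cap E_\mathscr{E}$ is contained in $\{(a,b)\mid a\notin C_1^\perp\text{ or }a=\overline{0}\}$, and Theorem~\ref{graph correct} gives the conclusion.

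The main obstacle is not computational but conceptual: one must confirm that the edge $(\overline{0},\overline{1})$, which is shared by both graphs, does not violate the avoidance condition. This is where the flexibility in Theorem~\ref{graph correct} over the stronger Corollary~\ref{cor: graph correct} is essential; an encoding satisfying Equation~\eqref{cor eqn} instead of Equation~\eqref{graph correct eqn} could not reach $n-1$ logical qubits in the even-$n$ case, which was the prior barrier from \cite{full}. I would close by remarking that the choice $C_1=\langle e_n\rangle$ is not canonical; any one-dimensional $C_1\subset\mathbb{F}_2^n$ whose orthogonal complement excludes $\overline{1}$ works identically, since the argument uses only $\overline{1}\notin C_1^\perp$ and $\dim(C_1)=1$.
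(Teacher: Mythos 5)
Your proposal is correct and takes essentially the same route as the paper: the paper's proof of this theorem is Example~\ref{ex2}, which likewise fixes $C=\mathbb{F}_2^n$ and $C_1=\langle e_n\rangle$, reads off $\hat E_{\mathscr{E}}$ and $\mathscr{L}_{\mathscr{E}}$ from Equation~\eqref{fully correlated edges}, observes that the only edges common to $G_{\mathscr{E}}$ and $G_C^{C_1}$ are the loop at $\overline{0}$ and the simple edge $(\overline{0},\overline{1})$ with $\overline{1}\nin C_1^\perp$, and then invokes Theorem~\ref{graph correct}, with the dimension count $k=\dim(C)-\dim(C_1)=n-1$ giving the encoding rate. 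Your closing observation that any $C_1=\langle v\rangle$ with $\overline{1}\nin C_1^\perp$ (equivalently, $v$ of odd weight) works equally well is also made in the paper in the paragraph following the theorem.
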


In Theorem~\ref{thm: fullycorrelated odd} we provide constructive examples of RSCs that protect against fully correlated error which encode the maximum number of logical qubits as there is no way to encode $n$ logical qubits into $n$ physical qubits. By surpassing the previously thought maximum encoding in \cite{full} and a similar result in \cite{newfull}, we see the true power of this graph theoretic approach. Encoding the errors as edges that need to be avoided a simple answer arises for a once complicated situation. Furthermore, by simple inspection of the error avoidance graph we can create a new code with the same encoding rate by setting $C_1=\lan v\ran$ for any $v\in\mathbb{F}_2^n$ with the only condition  that the weight of $v$ is odd.

\subsection{Perfect code in a 4-state system}\label{perfectcoding}

We now construct a perfect code in a 4-state system analogous to the Perfect Code for qubits given in \cite{perfect}. This reflexive stabilizer code achieves the optimal encoding of a single qudit which protects against single qudit errors.

The basic principal in quantum error correction is the concept that each error transforms distinct code words into distinct orthogonal subspaces. This becomes quite restrictive on the minimal number of physical qudits one can embed into. 
% In what follows, we show that four is the minimum number of physical qudits required for this purpose.
For now, we present only the minimal length for a specific example, namely the case of a 4-state system ($d=4$). This topic, in its full generality, is the subject of future work. 

In analogy to \cite{perfect}, we are interested in protecting against single qudit flip and phase errors. This error set on $n$ qudits is given by 
\begin{equation}
\mathscr{E}=\{\mathbb{1},\;\alpha D_{e_i,\overline{0}},\;\alpha D_{\overline{0},e_i}\;\;|\;\;1\le i\le n,\; 1\le \alpha \le 3\}
\end{equation}
In other words, the embedding space requires an orthogonal subspace for each of the 3 flip and 3 phase errors on each qudit plus one for the unperturbed state. This makes a total of $6n+1$ errors to protect against. To encode $k=1$ logical qudits, we must quadruple this to have enough space to accommodate for each of the $d=4$ embedded states.  Thus, we require $4(6n+1)$ distinct dimension in our Hilbert space. That is, we have the condition
\begin{equation}\label{opt boi}
4(6n+1)\leq 4^n.
\end{equation}
The smallest number satisfying this equation is $n=4$ meaning we must have four physical qudits to encode a single logical qudit. 

Before proceeding, we summarize the properties of the error avoidance graph for single qudit errors. 
% To do this we will first use some general observations about single qudit correction and reflexive stabilizers arising from LUC graphs. 
\begin{thm}\label{single graph}
Let $\mathscr{E}$ be the set of single qudit errors on a $d$-state quantum system of $n$ qudits. Then $G_\mathscr{E}$ has loops at all vertices with exactly one non-zero entry and $\overline{0}$, and has simple edges between $\overline{0}$ and the vertices of weight two and between distinct vertices of weight one:
% \begin{subequations}
% \begin{align}
% \mathscr{L}_\mathscr{E}=\{\alpha e_i\,|\,1\leq i\leq n,\,\alpha\in\mathbb{F}_d\}\label{eq1}\\
% E_\mathscr{E}\subseteq\{(a,b)\,|\, \op{w}(a)\leq 2,\,\op{w}(b)\leq 2\}\label{eq2}
% \end{align}
% \end{subequations}
\begin{equation}
E_\mathscr{E}=\{(a,b)\;|\; \op{w}(a)=\op{w}(b)=1\;\text{or}\;a=\overline 0\;\text{and}\;\op{w}(b)\le 2\}.
\end{equation}
% In particular, 
% \begin{equation}
% E_\mathscr{E}\subseteq \{(a,b)\,|\, \op{w}(a)\leq 2,\,\op{w}(b)\leq 2\}.
% \end{equation}
\end{thm}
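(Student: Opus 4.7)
The plan is to enumerate the conjugate error set $\mathscr{E}^2$ directly using the product law from Equation~\eqref{Dab props}, which gives $D_{a,b}^{-1}D_{c,d}=\omega^{\langle b,c\rangle}D_{c-a,\,d-b}$, and then read the edges of $G_\mathscr{E}$ off the subscripts. Since $E_\mathscr{E}$ is defined modulo the central phase $\omega^\kappa$, the scalar prefactor is irrelevant; only the pair $(c-a,\,d-b)$ determines the edge.

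First I split $\mathscr{E}$ into three families: the identity $\mathbb{1}=D_{\overline 0,\overline 0}$, the flips $D_{\alpha e_i,\overline 0}$, and the phases $D_{\overline 0,\beta e_j}$ (with $\alpha,\beta\in\mathbb{F}_d^*$ and $1\le i,j\le n$). Then I compute $E_1^{-1}E_2$ for each of the nine ordered pairs of families. The resulting subscripts collapse into three patterns: pairing two flips (or a flip with $\mathbb{1}$) produces pairs $(a,\overline 0)$ whose first coordinate has weight $0$, $1$, or $2$ depending on whether the flips cancel, fall at the same index, or at distinct indices; pairing two phases is symmetric, giving $(\overline 0,b)$ with $\op{w}(b)\le 2$; and pairing a flip with a phase yields $(-\alpha e_i,\beta e_j)$, that is, pairs $(a,b)$ with both coordinates of weight one. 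Verifying that every edge described in the theorem is actually attained (not merely contained) is a short surjectivity check: for instance the weight-2 vertex $\gamma e_i+\gamma' e_j$ with $i\neq j$ is hit by the two-flip case with $\alpha=-\gamma,\alpha'=\gamma'$, and every pair of weight-one vertices is realized from the flip-phase case by choosing the appropriate scalars.

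The loop description then drops out by restricting this edge set to the diagonal $a=b$. The trivial loop at $\overline 0$ comes from $\mathbb{1}\cdot\mathbb{1}$, while a loop at a weight-one vertex arises in the flip-phase case precisely when $i=j$ and $\beta=-\alpha$, producing a loop at $-\alpha e_i$; letting $\alpha$ and $i$ vary yields a loop at every weight-one vertex and nothing else, since the other two patterns place one coordinate at $\overline 0$ and hence have no weight-one diagonal elements.

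I anticipate no serious obstacle beyond careful case bookkeeping; the argument is a mechanical enumeration driven by the multiplication law. The main point of care is to confirm that no spurious edges appear outside the three patterns identified and, conversely, that every pair listed in the displayed formula is achieved by at least one product, so that the computed edge set matches the claim exactly.
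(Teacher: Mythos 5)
Your proof is correct: the nine-way case split over pairs drawn from $\{\mathbb{1}\}$, the flips $D_{\alpha e_i,\overline 0}$, and the phases $D_{\overline 0,\beta e_j}$, together with the surjectivity check, yields exactly the stated edge set, and the loop description follows by restricting to the diagonal. The paper states this theorem without proof, treating it as a routine consequence of the multiplication law in Equation~\eqref{Dab props}, so your mechanical enumeration is precisely the verification the authors leave implicit rather than a genuinely different route.
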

% \begin{proof}
% Equation~\eqref{eq1} holds from Equation~\eqref{Dab props} and noting that the only errors in $\mathscr{E}$ which create loops, besides the identity, are conjugate errors of the form
% $D_{-e_i,\overline{0}}D_{\overline{0},e_j}$. Equation~\eqref{eq2} follows in a similar manner.
% \end{proof}

By plotting the error avoidance graph for small $n$, one immediately sees the difficulty of avoiding all single qudit errors. As an example, we show $G_\mathscr{E}$ for $n=2$ in Figure~\ref{no work}. 

%%%%%%%%%%%%%---- NO WORK FIGURE BEGINS ----%%%%%%%%%%%%%

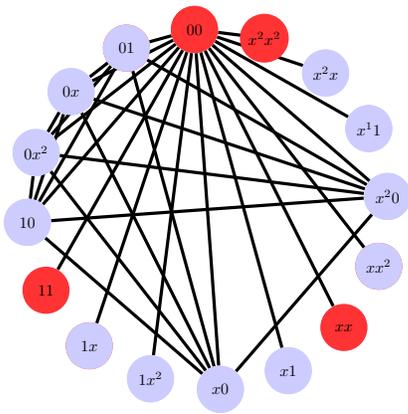
\begin{figure}[h]
\begin{center}
\begin{tikzpicture}
  [scale=.6,auto=left,every node/.style={circle,fill=blue!20, scale=.6}]
 \node[fill=red!80] (n1) at  (-1.79229,3.57599)  {0000};
 \node (n2) at  (-3.02,2.612) {0001};
 \node (n3) at (-3.796,1.261) {0010};
 \node (n4) at (-3.989,-0.287393) {0011};
 \node (n5) at (-3.57599,-1.79229) {0100};
 \node[fill=red!80] (n6) at (-2.6179,-3.024) {0101};
 \node (n7) at (-1.26126,-3.756) {0110};
\node (n8) at (0.287,-3.98966) {0111};
\node (n9) at (1.79229,-3.57599) {1000};
\node (n10) at (3.02434,-2.6179) {1001};
\node[fill=red!80] (n11) at (3.79595,-1.26126) {1010};
\node (n12) at (3.98966,0.287393) {1011};
\node (n13) at (3.57599,1.79229) {1100};
\node (n14) at (2.6179,3.02434) {1101};
\node (n15) at (1.26126,3.79595) {1110};
\node[fill=red!80] (n0) at (-0.287393,3.98966) {1111};
% G_E edges (Duncan's attempt)
  \foreach \from/\to in {n0/n1,n0/n2,n0/n3,n0/n4,n0/n5,n0/n6,n0/n7,n0/n8,n0/n9,n0/n10,n0/n11,n0/n12,n0/n13,n0/n14,n0/n15}
    \draw[very thick] (\from) -- (\to);
    
      \foreach \from/\to in {n1/n2,n1/n3,n1/n4,n1/n8,n1/n12,n2/n3,n2/n4,n2/n8,n2/n12,n3/n4,n3/n8,n3/n12,
      n4/n8,n4/n12,n8/n12}
    \draw[very thick] (\from) -- (\to);
 
% G_C Edges (Rob's complete graph)
%   \foreach \from/\to in {n0/n1,n0/n2,n0/n3,n0/n4,n0/n5,n0/n6,n0/n7,n1/n2,n1/n3,n1/n4,n1/n5,n1/n6,n1/n7,n2/n3,n2/n4,n2/n5,n2/n6,n2/n7,n3/n4,n3/n5,n3/n6,n3/n7,n4/n5,n4/n6,n4/n7,n5/n6,n5/n7,n6/n7}
%     \draw[thick] (\from) -- (\to);
    
%       \foreach \from/\to in {n0/n8,n0/n9,n0/n10,n0/n11,n0/n12,n0/n13,n0/n14,n0/n15}
%     \draw[thick] (\from) -- (\to);
    
%         \foreach \from/\to in {n1/n8,n1/n9,n1/n10,n1/n11,n1/n12,n1/n13,n1/n14,n1/n15}
%     \draw[thick] (\from) -- (\to);
    
%         \foreach \from/\to in {n2/n8,n2/n9,n2/n10,n2/n11,n2/n12,n2/n13,n2/n14,n2/n15}
%     \draw[thick] (\from) -- (\to);
    
%         \foreach \from/\to in {n3/n8,n3/n9,n3/n10,n3/n11,n3/n12,n3/n13,n3/n14,n3/n15}
%     \draw[thick] (\from) -- (\to);
    
%         \foreach \from/\to in {n4/n8,n4/n9,n4/n10,n4/n11,n4/n12,n4/n13,n4/n14,n4/n15}
%     \draw[thick] (\from) -- (\to);
    
%         \foreach \from/\to in {n5/n8,n5/n9,n5/n10,n5/n11,n5/n12,n5/n13,n5/n14,n5/n15}
%     \draw[thick] (\from) -- (\to);
%         \foreach \from/\to in {n6/n8,n6/n9,n6/n10,n6/n11,n6/n12,n6/n13,n6/n14,n6/n15}
%     \draw[thick] (\from) -- (\to);
    
%         \foreach \from/\to in {n7/n8,n7/n9,n7/n10,n7/n11,n7/n12,n7/n13,n7/n14,n7/n15}
%     \draw[thick] (\from) -- (\to);
    
%     \foreach \from/\to in {n8/n9,n8/n10,n8/n11,n8/n12,n8/n13,n8/n14,n8/n15,n9/n10,n9/n11,n9/n12,n9/n13,n9/n14,n9/n15,n10/n11,n10/n12,n10/n13,n10/n14,n10/n15,n11/n12,n11/n13,n11/n14,n11/n15,n12/n13,n12/n14,n12/n15,n13/n14,n13/n15,n14/n15}
%     \draw[thick] (\from) -- (\to);

 \node[fill=red!80] (n0) at  (-0.287393,3.98966)   {0000};
 \node (n1) at (-1.79229,3.57599) {0001};
 \node (n2) at (-3.02,2.612){0010};
 \node (n3) at  (-3.796,1.261)   {0011};
 \node (n4) at (-3.989,-0.287393)  {0100};
 \node[fill=red!80] (n5) at (-3.57599,-1.79229)  {0101};
 \node (n6) at (-2.6179,-3.024)  {0110};
\node (n7) at (-1.26126,-3.756) {0111};
\node (n8) at  (0.287,-3.98966)  {1000};
\node (n9) at (1.79229,-3.57599) {1001};
\node[fill=red!80] (n10) at  (3.02434,-2.6179) {1010};
\node (n11) at  (3.79595,-1.26126) {1011};
\node (n12) at (3.98966,0.287393) {1100};
\node (n13) at  (3.57599,1.79229)  {1101};
\node (n14) at (2.6179,3.02434) {1110};
\node[fill=red!80] (n15) at  (1.26126,3.79595) {1111};
    
\node[fill=red!80] (n0) at  (-0.287393,3.98966)  {00};
\node (n1) at (-1.79229,3.57599)   {01};
\node (n2) at (-3.02,2.612) {0$x$};
\node (n3) at  (-3.796,1.261) {0$x^2$};
\node (n4) at  (-3.989,-0.287393) {10};
\node[fill=red!80] (n5) at  (-3.57599,-1.79229)  {11};
\node (n6) at (-2.6179,-3.024) {1$x$};
\node (n7) at  (-1.26126,-3.756) {1$x^2$};
\node (n8) at (0.287,-3.98966)   {$x$0};
\node (n9) at (1.79229,-3.57599)  {$x$1};
\node[fill=red!80] (n10) at (3.02434,-2.6179) {$xx$};
\node (n11) at  (3.79595,-1.26126) {$xx^2$};
\node (n12) at  (3.98966,0.287393) {$x^2$0};
\node (n13) at  (3.57599,1.79229) {$x^1$1};
\node (n14) at  (2.6179,3.02434) {$x^2x$};
\node[fill=red!80] (n15) at  (1.26126,3.79595) {$x^2x^2$};

\end{tikzpicture}

\end{center}
\caption{The error avoidance graph $G_\mathscr{E}$ for the set of all single qudit errors on a four-state quantum system of two qudits. The field of four elements is denoted as $\mathbb{F}_4=\{0,1,x,x^2\}$ where $x^2+x+1=0$. The quantity of edges makes it impossible to find an RSC to correct against all possible errors. (See also Equation~\ref{opt boi}.)  }
\label{no work}
\end{figure}

% To give an intuitive graph, of this single qudit avoidance we provide a quick example of the edges we would need to avoid for smaller system, namely $n=2$. That is we will consider codes in the field with four elements, which we denote as
% \begin{equation}
% \mathbb{F}_4=\{0,1,x,x^2\}
% \end{equation}
% where $x^2+x+1=0$. Thus the graph with edges $\mathscr{A}_\mathscr{E}$, where $\mathscr{E}$ is all single qudit errors for a 4-state system with a 2 qudits is given in Figure~\ref{no work}. 

% Further, one can quickly examine the edges of Figure~\ref{no work} and see that this is a complete graph, and just as Equation~\ref{opt boi} is not satisfied for $n=2$, we cannot find $G(C)$ with the conditions of Theorem ~\ref{graph correct}.

%%%%%%%%%%%%---- NO WORK FIGURE ENDS ----%%%%%%%%%%%%

As a consequence of Theorem~\ref{single graph}, we have the following corollary. The details are expounded in Appendix~\ref{app: benchmark}.
% This Theorem has an immediate consequence, in lieu of Theorem~\ref{graph correct}, which is elaborated on later in the appendix in Theorem~\ref{error ext}.
\begin{cor}\label{single correct}
Let $C_1\subset C\subset\mathbb{F}_d^n$ be linear subspaces such that $\op{wt}(C)\geq 3$ and $\op{wt}(C_1^\perp)\geq 2$. The reflexive stabilizer code $R_C^{C_1}$ can correct any single qudit error. 
\end{cor}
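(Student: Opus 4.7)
The plan is to verify the hypothesis of Theorem~\ref{graph correct} directly from the explicit description of $G_\mathscr{E}$ in Theorem~\ref{single graph}. In fact I will show the stronger statement that every edge in $E_C^{C_1}\cap E_\mathscr{E}$ is incident to $\overline 0$, which immediately implies the required containment.

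The natural first step is to split into simple edges and loops. For the simple edges, any $(a,b)\in\hat E_C$ satisfies $b-a\in C\setminus\{\overline 0\}$, so $\op{w}(b-a)\geq \op{wt}(C)\geq 3$. By Theorem~\ref{single graph}, a simple edge of $\hat E_\mathscr{E}$ either has $\overline 0$ as an endpoint or joins two distinct weight-one vertices; in the latter case $\op{w}(b-a)\leq 2$, contradicting the weight bound on $C$. Hence every common simple edge must be incident to $\overline 0$, and the condition of Theorem~\ref{graph correct} is satisfied trivially for such edges.

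For the loops, recall that the loops of $G_C^{C_1}$ occur exactly at the vertices of $C_1^\perp$, while the loops of $G_\mathscr{E}$ sit at $\overline 0$ and at the weight-one vertices. A common loop at a nonzero vertex $a$ would therefore force $\op{w}(a)=1$ with $a\in C_1^\perp$, which would drop $\op{wt}(C_1^\perp)$ to at most one, contradicting the hypothesis. Consequently the only common loop is at $\overline 0$.

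Combining the two cases, the containment required by Theorem~\ref{graph correct} holds and $R_C^{C_1}$ corrects every single qudit error. The argument has no real obstacle: the two weight bounds are tailored precisely to kill the weight-two obstruction on the simple-edge side and the weight-one obstruction on the loop side. The one point that warrants care in the write-up is to remember that $\hat E_\mathscr{E}$ contains edges between two nonzero weight-one vertices, not just edges emanating from $\overline 0$; this is exactly what makes the condition $\op{wt}(C)\geq 3$ -- rather than the weaker $\op{wt}(C)\geq 2$ -- necessary.
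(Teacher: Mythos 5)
Your proof is correct. The two weight hypotheses do exactly what you say: $\op{wt}(C)\geq 3$ kills every common simple edge (a weight-one-to-weight-one edge has difference of weight at most $2$, and an edge from $\overline 0$ to a vertex of weight at most $2$ cannot lie in $\hat E_C$ either, since any nonzero element of $C$ has weight at least $3$), and $\op{wt}(C_1^\perp)\geq 2$ excludes the weight-one vertices from $\mathscr{L}_{C_1}=C_1^\perp$, leaving only the loop at $\overline 0$. In fact you have shown $E_C^{C_1}\cap E_\mathscr{E}=\{(\overline 0,\overline 0)\}$, so you could have cited the simpler Corollary~\ref{cor: graph correct} directly. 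This is worth contrasting with the route the paper actually indicates: the corollary is announced as a consequence of Theorem~\ref{single graph}, but the details are deferred to Appendix~\ref{app: benchmark}, which argues through the isomorphism between reflexive stabilizer codes and CSS codes and imports the classical weight condition of Theorem~\ref{error ext}. That path is stated for qubits ($d=2$) and yields thresholds of the form $\op{wt}(\cdot)\geq 2t+1$, which do not literally match the asymmetric hypotheses $\op{wt}(C)\geq 3$ and $\op{wt}(C_1^\perp)\geq 2$ of the corollary. Your direct graph-theoretic verification is self-contained, valid for arbitrary $d=p^m$, and explains precisely where each of the two hypotheses is used; it is arguably the proof the corollary's placement calls for, and it fills in a step the paper leaves implicit.
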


We are now ready to construct our perfect code on a four-state system. 
\begin{ex}\label{4statex}
Let $\mathscr{E}$ be the set of single qudit errors described above. 
The perfect code for a four-state system will embed a single logical qudit into the optimal four physical qudit system, set by Equation~\ref{opt boi}. 
First, set $C\subset\mathbb{F}_4^4$ to be the 2-dimensional connection set consisting of the following vectors:
\begin{center}
\begin{tabular}{cccc}
$0000$,&$1x10$,&$xx^2x0$,&$x^21x^20$,\\
$x^2x^201$,&$110x$,&$xx0x^2$,&$x111$\\
$x^2xxx$,&$1x^2x^2x^2$,&$10x1$,&$x0x^2x$,\\
$x^201x^2$,&$0x^21x$,&$01xx^2$,&$0xx^21$.
\end{tabular}
\end{center}
Note that $\op{wt}(C)=3$. According to Corollary~\ref{single correct}, we must find a $C_1\subset C$ of dimension 1 such that $\op{wt}(C_1^\perp)\ge 2$. 
The subset $C_1=\{0000,x111,x^2xxx,1x^2x^2x^2\}$ satisfies these conditions. Thus $R_C^{C_1}$ can correct any single qudit error. 
\end{ex}

%%%%%%%%%---- EXTRA FIGURE ----%%%%%%%%%%%%%%%

\comment{
\begin{figure}[h]
\begin{center}
\begin{tikzpicture}[scale=0.6 ,transform shape]
  %the multiplication with floats is not possible. Thus I split the loop in two.
  \foreach \number in {1,...,8}{
      % Computer angle:
        \mycount=\number
        \advance\mycount by -1
  \multiply\mycount by 45
        \advance\mycount by 0
      \node[draw,circle,inner sep=0.25cm] (N-\number) at (\the\mycount:5.4cm) {};
    }
  \foreach \number in {9,...,16}{
      % Computer angle:
        \mycount=\number
        \advance\mycount by -1
  \multiply\mycount by 45
        \advance\mycount by 22.5
      \node[draw,circle,inner sep=0.25cm] (N-\number) at (\the\mycount:5.4cm) {};
    }
  \foreach \number in {1,...,15}{
        \mycount=\number
        \advance\mycount by 1
  \foreach \numbera in {\the\mycount,...,16}{
    \path (N-\number) edge[->,bend right=3] (N-\numbera)  edge[<-,bend
      left=3] (N-\numbera);
  }
}
\end{tikzpicture}
\end{center}
\caption{\DW{Add references and words.}}
\label{no work}
\end{figure}
}

\section{Discussion}\label{discussion}
In this work, we introduced a novel approach to quantum error correction motivated by graph theory. We developed two graphs -- error avoidance graphs and LUC graphs -- to visualize an error sets and reflexive stabilizer codes, respectively, and repose the algebraic conditions of error correction in terms of edge avoidance. This approach, summarized as a heuristic algorithm, places the error sets at the forefront by providing a means to construct an encoding that protects against a predetermined noisy channel. This viewpoint is attractive because it allows for the development of codes that protect errors in a variety quantum computer architectures regardless of the intrinsic set of errors present. Furthermore, this viewpoint promotes a collaborative mindset by recognizing that the engineers tasked with developing a quantum computer have limited control over the suppression of errors. 

Another benefit to this approach is that it allows for the correction of correlated errors directly and without additional assumptions. For instance, it is common in the literature to presume independence of errors. When applied to a set of correlated errors, this presumption manifests by effectively requiring for correction against a larger error set which can lead to lower rates of error correction and fidelity \cite{temporalcorrelation}. 
Moreover, recent experimental observations of correlated errors bring into question the validity of this assumption \cite{correlated}.  For these reasons, the ability to correct correlated errors has become increasingly relevant. 

Lastly, we argue that RSCs are not only easy to use, but practical as well: 
We have benchmarked the error correction rates for RSCs against the industry-standard CSS codes, showing that RSCs have the same capabilities for single qubit error correction. We have developed RSCs in a framework that allows for error correction on multi-state quantum system represented as qudits, generalizing the two-state system represented by qubits. The relevance of this framework is supported by recent experiments demonstrating that more than two energy levels are measurable in a system of silicon-based quantum dots \cite{quditsnotqubits}. In addition, we present two constructive instances of optimal encodings: a maximal encoding of qubits that corrects fully correlated noise, and a perfect code which minimally encodes a single qudit on a four-state system against single qudit errors. The former example improves on the previously-proven ``optimal" encoding rate shown in \cite{full} demonstrating even further the ease of use of the graph-theoretic representation. 

% This novel approach to quantum error correction is motivated by the lack of research on correcting correlated errors or even unique errors which arise in unique machine configurations. As the field of practical quantum computing is coming of age, the need to handle exotic sets of errors due to unforeseen interactions with messy environments is upon us. To usher in this new approach we introduced a novel class of quantum stabilizer codes, reflexive stabilizers. On their own, reflexive stabilizers are quite similar to their predecessor, CSS codes, which have become the leading approach to quantum error correction as they have been implemented as surface codes on a number of new machines used in quantum computing \RV{citations}.
% Our method differs from that taken in \cite{full} as it is a stabilizer code which lends itself to be realized as a surface code used in many popular approaches currently \RV{citation for computers}. Additionally, for this example, our novel technique shows how reflexive codes can take intimidating noisy channels and with the use of graph theory they become a simple picture, where one need only avoid one edge. The fully correlated noise encoding could be beneficial if applied to a small number of qubits which are close in physical distance and can be separated from others, and then concatenated with other iterations of this code or another code.

This introductory work on reflexive stabilizer codes establishes the utility of the graph-theoretic approach employed, and displays a number of practical and theoretical applications. However, with the new approach comes a wealth of unanswered questions and avenues for future research. The connection between quantum error correcting codes and edge avoidance in graphs has opened up a rich vein of future research opportunities. Below we provide a small list of questions to help guide future explorations. 

The first set of questions address fundamental existence and uniqueness conditions for error sets and reflexive stabilizer codes. Their wording is designed to make them accessible to researchers in graph theory and quantum information science, and to promote collaboration between the same. 

% \DW{Add sentence somewhere "bounds for other values of $k,d$ can be obtained similarly}

\begin{quest}{1}
What necessary and sufficient conditions on a given error set $\mathscr{E}$ or, equivalently, an error avoidance graph $G_{\mathscr{E}}$ guarantee the existence of a RSC $R_C^{C_1}$ or, equivalently, a LUC graph $G_C^{C_1}$ that corrects those errors?
\end{quest}

\begin{quest}{2}
Given two error sets $\mathscr{E}$ and $\mathscr{E}'$ or, equivalently, two error avoidance graphs $G_{\mathscr{E}}$ and $G_{\mathscr{E}'}$, what properties guarantee a common RSC to protect against each set? 
\end{quest}

\begin{quest}{3}
When is it true that two given error sets $\mathscr{E}$ and $\mathscr{E}'$ produce isomorphic error avoidance graphs $G_\mathscr{E}$ and $G_{\mathscr{E}'}$. 
\end{quest}

The last two questions are more specific in scope. First, we look at quantum random walks. This field of study already lies at the intersection of quantum information and algebraic graph theory. Moreover, they have been shown to be universal for quantum computation by exploiting perfect or group state transfer on graphs\cite{pst,fracrevival,gst}. % The next question expands the possible routes taken to study reflexive stabilizer codes using quantum random walks. 
The last question is an option to incorporate graph theory techniques into the study of quantum error correcting codes, and is related to the works \cite{graphons,gvbound}. 

\begin{quest}{4}
What LUC graphs have state transfer with quantum random walks, continuous or discrete? \cite{cayleywalk} 
\end{quest}

% The next question is an option to incorporate graph theory techniques into the study of quantum error correcting codes, and is related to the works \cite{graphons,gvbound}. 

\begin{quest}{5}
Using limiting properties of graphs or graphons, can one find a GV-Bound for reflexive stabilizer codes? 
\end{quest}

\begin{acknowledgements}
We would like to thank Alastair Kay and David Feder for their useful conversation and insight. We would also like to thank the referees for their comments which helped reshape this manuscript. 
\end{acknowledgements}

\appendix
%%%%%%%% OLD STUFF %%%%%%%%%%%%%%%%
% There is no canonical choice for a joint eigenspace given a stabilizer subgroup, so to make our new codes more user friendly, and to faciliatate easier calculations, we now choose one. 

%with additional symmetries vis-\`a-vis their well-known predecessors; e.g.\ CSS codes.\cite{goodcodes} \DW{Check ref} \RV{I'm good with not mentioning CSS codes ever... especially not in the section that the point of it is that the graphs make the codes...} It is these symmetries that make our novel codes amenable to the LUC graph representation.  
% First, we introduce our novel subclass, reflexive stabilizers. 

% \input{app_proof}

\section{Constructing Reflexive Stabilizer Code}\label{app: constructive}
Here we provide a constructive form for reflexive stabilizer codes. To do so, we must construct a joint eigenspace for the stabilizer $S_C^{C_1}$. 

First, we examine the eigenspaces of $X(1)Z(1)$ in $\mathbb{C}^p$, where $p$ is a prime. The eigenvalues for $X(1)Z(1)$ are the $p^\text{th}$ roots of unity $\omega^\kappa$, where each eigenspace is one-dimensional. 
The eigenvalue $\omega^\kappa$
is spanned by the eigenstate
\begin{equation}\label{fp states}
|\psi_\kappa\ran=\frac{1}{\sqrt{d}}\sum_{a\in\mathbb{F}_p}\alpha_{a}|a\ran 
\end{equation}
where $\alpha_0=1$, $\alpha_{p-1}=\omega^{\kappa+1}$ and, for $1\leq i\leq p-2$, $\alpha_i=\omega^{T^\kappa_i}$ for $T^\kappa_i=\frac{i(i-1-2\kappa)}{2}$.

Notice that, for qubits ($d=2$), the eigenstates for $\pm \iota$ are exactly the conjugate (Hadamard) basis states 
\begin{align}\label{p2}
|\psi_0\ran&=\frac{1}{\sqrt{2}}\left(\iota|0\ran+|1\ran\right)\\
|\psi_1\ran&=\frac{1}{\sqrt{2}}\left(-\iota|0\ran+|1\ran\right).
\end{align}
For all other $a\in\mathbb{F}_p$, Equation~\eqref{eqcomm} yields the relation 
% \begin{align}\label{prf}
% X(a)Z(a)&=\omega^{\tau_a}\left(X(1)Z(1)\right)^{a}\\
% \tau_a&=\displaystyle\frac{a(a-1)}{2}.
% \end{align}
\begin{equation}
X(a)Z(a)=\omega^{\frac{a(a-1)}{2}}\left(X(1)Z(1)\right)^{a}    
\end{equation}
Hence the eigenstates for $X(a)Z(a)$ are exactly those for $X(1)Z(1)$ given in Equation~\eqref{fp states}.
We can then extend to $a=\sum_{i=1}^m\alpha_if_i\in\mathbb{F}_d$, using the alternative definition for the Pauli operators
given in \cite{nonbinary}, to get 
\begin{align}\label{prf}
X(a)Z(a)&=\omega^{\tau_a}\bigotimes_{i=1}^m\left(X(1)Z(1)\right)^{\alpha_i}\\
\tau_a&=\displaystyle\frac{1}{2}\sum_{i=1}^m\alpha_i(\alpha_i-1).
\end{align}
Equation~\eqref{prf} is used to extend the states $|\Psi_\kappa\ran$ in Equation~\eqref{fp states} to eigenstates for each $a\in\mathbb{F}_d$. By taking tensor products, we extend further to $ a\in\mathbb{F}_d^n$ for an arbitrary $n$-state, $d$-level quantum system.

% \begin{defn}
Given a LUC $G(C)$ and a linear subspace $C_1\subset C$ we define the \textbf{reflexive quantum stabilizer code} (RSC) of $G(C)$ and $C_1$ as
\begin{equation}
R^C_{C_1}=\left.\left\{|\Phi_{c'}\ran=\frac{1}{\sqrt{|C_1|}}\sum_{c\in C_1}D_{c,\overline{0}}|\Psi_{c'}\ran\,\right|\,c'\in C\right\},
\end{equation}
where $|\Psi\ran=\bigotimes_{i=1}^n |\psi_{c_i}\ran$, for $c=(c_1,...,c_n)$.
% \end{defn}

\section{Lemmas for Error Correcting Theorem}\label{app: lemmas}
This section contains the lemmas leading up to Theorem~\ref{graph correct}. We will restate the conditions of Theorem~\ref{errorcorrecting}, given again below, in terms of the edges of the LUC and error avoidance graphs. 
% \begin{thm}
% \DW{Rob, can you redo the custom labeling from before, please?}
% Let $S$ be a commutative subgroup of $\mathcal{E}_n$ which contains the center, i.e.\ $\mathcal{Z}\subset S$.
% Further, let $\mathscr{E}\subset\mathcal{E}_n$ be an error set.  Then any stabilizer code for $S$ is an error-correcting code which will correct any error from $\mathscr{E}$ if and only if every conjugate error $E\in\mathscr{E}^2$ satisfies either $E\in S$ or $E\nin S^\perp$. 
% \end{thm}
The first lemma gives conditions for a conjugate error to avoid $\big(S_C^{C_1}\big)^\perp$ of a reflexive stabilizer code $R_C^{C_1}$. 

\begin{lem}\label{graph correct lemma 2}
Let $d=p^m$ for some prime $p$, $C_1\subset C\subset\mathbb{F}_d^n$ be linear subspaces, and $\mathscr{E}$ an error set. The set of non-trivial conjugate errors $E=\omega^\kappa D_{a,b}\in\mathscr{E}^2$ that lie outside of $\big(S_C^{C_1}\big)^\perp$; i.e.\ the set $\mathscr{E}^2\setminus\big(\big(S_C^{C_1}\big)^\perp\cup \mathbb{1}\big)$, is characterized by the following graph-theoretic relation on $G_\mathscr{E}$ and $G_C^{C_1}$: 
\begin{equation}\label{graph correct eqn 5}
E_C^{C_1}\cap {E}_\mathscr{E}\subseteq\{(a,b)\,|\,a\nin C_1^\perp\}.
\end{equation}
% and 
% \begin{equation}\label{graph correct eqn 6}
% \mathscr{L}_\mathscr{E}\cap C^\perp_1=\emptyset. %\{0\}. 
% \end{equation}
\end{lem}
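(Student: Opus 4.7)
My plan is to unfold the algebraic description of the centralizer given in Lemma~\ref{sperp} into the vertex-edge language of the graphs $G_C^{C_1}$ and $G_\mathscr{E}$, and then to verify that these two descriptions match up precisely. By Lemma~\ref{sperp}, a conjugate error $E = \omega^\kappa D_{a,b}$ lies in $(S_C^{C_1})^\perp$ exactly when $a-b \in C$ and $a \in C_1^\perp$. The phase $\omega^\kappa$ plays no role since the center $\mathcal{Z}$ is contained in $S_C^{C_1}$, and hence in $(S_C^{C_1})^\perp$.

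Next, I will unpack the definition $E_C^{C_1} = \hat E_C \cup \mathscr{L}_{C_1}$ into algebraic statements: the simple edge $(a,b) \in \hat E_C$ requires $b - a \in C$ with $a \neq b$, while the loop $(a,a) \in \mathscr{L}_{C_1}$ requires $a \in C_1^\perp$. Combining these with the centralizer criterion yields the key equivalence
\begin{equation*}
D_{a,b} \in (S_C^{C_1})^\perp \quad \Longleftrightarrow \quad (a,b) \in E_C^{C_1} \text{ and } a \in C_1^\perp.
\end{equation*}
This is established by case analysis on whether $a = b$ or $a \neq b$. In the loop case $a = b$, the condition $a - b = 0 \in C$ is automatic, and centralizer membership reduces to $a \in C_1^\perp$, which is precisely what places $(a,a)$ in $\mathscr{L}_{C_1}$. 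In the simple-edge case $a \neq b$, the condition $a - b \in C$ is equivalent to $(a,b) \in \hat E_C$, and centralizer membership additionally requires $a \in C_1^\perp$.

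Because a conjugate error $E \in \mathscr{E}^2$ corresponds, by definition of $E_\mathscr{E}$, to the edge $(a,b) \in E_\mathscr{E}$, the equivalence translates directly: the conjugate errors lying inside $(S_C^{C_1})^\perp$ correspond exactly to edges $(a,b) \in E_C^{C_1} \cap E_\mathscr{E}$ with $a \in C_1^\perp$. Consequently, $E$ lies outside $(S_C^{C_1})^\perp$ if and only if the associated edge $(a,b)$ either falls outside $E_C^{C_1}$ or satisfies $a \notin C_1^\perp$. The displayed subset relation of the lemma asserts that every edge in the common intersection falls into this latter category, which is precisely the graph-theoretic encoding of the set $\mathscr{E}^2 \setminus ((S_C^{C_1})^\perp \cup \{\mathbb{1}\})$. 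The identity $\mathbb{1}$, corresponding to the edge $(\overline{0},\overline{0})$ with $\overline{0} \in C_1^\perp$ automatically, is explicitly removed from the left-hand side.

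The main obstacle will be the careful case analysis distinguishing loops from simple edges. The centralizer condition $a \in C_1^\perp$ is a single algebraic statement, but graph-theoretically it enters the definition of $\mathscr{L}_{C_1}$ for loops and only as an additional constraint on simple edges. Tracking both roles of this condition consistently, while accommodating the special treatment of the identity, is the technical heart of the argument.
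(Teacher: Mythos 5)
Your proposal is correct and follows essentially the same route as the paper: it invokes Lemma~\ref{sperp} for the algebraic form of $\big(S_C^{C_1}\big)^\perp$, splits into the loop case $a=b$ and the simple-edge case $a\neq b$, and translates the resulting conditions into membership in $E_C^{C_1}$ and $C_1^\perp$. Your write-up is in fact slightly more explicit than the paper's (e.g.\ in isolating the equivalence $D_{a,b}\in\big(S_C^{C_1}\big)^\perp\iff(a,b)\in E_C^{C_1}$ and $a\in C_1^\perp$, and in flagging the special role of the loop at $\overline 0$), but the underlying argument is identical.
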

\begin{proof}
Recall the form of $\big(S_C^{C_1}\big)^\perp$ given in Lemma~\eqref{sperp}: 
\begin{equation}
\left(S_C^{C_1}\right)^\perp=\left\langle D_{ab}\;\Big|\; a-b\in C,\;\text{and } a\in C_1^\perp \right\rangle.
\end{equation}
Fix a conjugate error $E=\omega^\kappa D_{a,b}\in\mathscr{E}^2$. 
First, suppose that $a\neq b$ and therefore $(a,b)\in\hat E_\mathscr{E}$. 
Then, by Lemma~\ref{sperp}, $E\nin \big(S_C^{C_1}\big)^\perp$ exactly when $b-a\nin C$; i.e.\ $(a,b)\nin E_C^{C_1}$, or $a\nin C_1^\perp$. Equation~\eqref{graph correct eqn 5} is exactly this condition when restricted to the simple edges.  

Next, suppose that $a=b$. Then, since $\overline 0\in C$, $E\nin \big(S_C^{C_1}\big)^\perp$ exactly when $a\nin C_1^\perp$. Equation~\eqref{graph correct eqn 5} is exactly this condition when applied to loops. 
\end{proof}

The next lemma gives conditions for a conjugate error to be in $S_C^{C_1}$ of a reflexive stabilizer code $R_C^{C_1}$. 

\begin{lem}\label{graph correct lemma 1}
Let $d=p^m$ for some prime $p$, $C_1\subset C\subset\mathbb{F}_d^n$ be linear subspaces, and $\mathscr{E}$ an error set. The set of conjugate errors $E=\omega^\kappa D_{a,b}\in\mathscr{E}^2$ that lie inside of $S_C^{C_1}$; i.e.\ the set $\mathscr{E}^2\cap S_C^{C_1}$, is characterized by the following graph-theoretic relations on $G_\mathscr{E}$ and $G_C^{C_1}$: 
\begin{equation}\label{graph correct eqn 3}
\hat E_C^{C_1}\cap \hat{E}_\mathscr{E}\subseteq\{(a,b)\,|\,a-b\in C_1\}
\end{equation}
and 
\begin{equation}\label{graph correct eqn 4}
\mathscr{L}_\mathscr{E}\subseteq C^\perp. %\{0\}. 
\end{equation}
\end{lem}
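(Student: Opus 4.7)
The plan is to exploit the explicit generating set of $S_C^{C_1}$ in Equation~\eqref{refl stab} to derive a normal form for its elements, then intersect with $\mathscr{E}^2$ by a case split on whether the corresponding edge in $G_\mathscr{E}$ is a loop or a simple edge.

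First, I would use the multiplication identity $D_{a,b}D_{c,d}=\omega^{-\lan b,c\ran}D_{a+c,\,b+d}$ from Equation~\eqref{Dab props} together with the linearity of $C^\perp$ and $C_1$ to derive the normal form
\begin{equation*}
S_C^{C_1}=\bigl\{\omega^\kappa D_{x,y}\;:\;y\in C^\perp,\;x-y\in C_1,\;\kappa\in\{0,\dots,p-1\}\bigr\}.
\end{equation*}
This is shown by a straightforward induction on the length of a word in the generators $D_{a,a}$ (for $a\in C^\perp$) and $D_{b,0}$ (for $b\in C_1$): each new generator multiplied on the right adjusts the first coordinate by an element of $C^\perp+C_1$, the second coordinate by an element of $C^\perp$, and contributes a phase $\omega^{-\lan\cdot,\cdot\ran}$ that is absorbed into the scalar $\omega^\kappa$. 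Consequently, $\omega^\kappa D_{a,b}\in S_C^{C_1}$ precisely when $b\in C^\perp$ and $a-b\in C_1$.

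Next, I would perform a case split on a conjugate error $E=\omega^\kappa D_{a,b}\in\mathscr{E}^2$. In the loop case $a=b$, so $(a,a)\in\mathscr{L}_\mathscr{E}$, the difference $a-b=0\in C_1$ is automatic and membership $E\in S_C^{C_1}$ collapses to $a\in C^\perp$. Stated uniformly across all conjugate-error loops, this becomes the containment $\mathscr{L}_\mathscr{E}\subseteq C^\perp$ of Equation~\eqref{graph correct eqn 4}. In the simple-edge case $a\neq b$, so $(a,b)\in\hat E_\mathscr{E}$, any such $E$ lying in $S_C^{C_1}$ satisfies $a-b\in C_1\subseteq C$, which already places $(a,b)$ in $\hat E_C=\hat E_C^{C_1}$. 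Hence each simple-edge conjugate error in $S_C^{C_1}$ contributes an edge of $\hat E_C^{C_1}\cap\hat E_\mathscr{E}$ with $a-b\in C_1$, which is the content of Equation~\eqref{graph correct eqn 3}. Combining the two cases identifies $\mathscr{E}^2\cap S_C^{C_1}$ as the union of the loop part selected by Equation~\eqref{graph correct eqn 4} and the simple-edge part selected by Equation~\eqref{graph correct eqn 3}.

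The main obstacle I expect is the normal-form derivation in the first step: one must carefully track how the phase $\omega^{-\lan y,u\ran}$ produced by each multiplication accumulates into the leading scalar while preserving the shape of the index data, and use linearity of the subspaces $C^\perp$ and $C_1$ to close the generating set under products. A subtler point is that the algebraic condition $b\in C^\perp$ is ordered while the edges of $G_\mathscr{E}$ are undirected; this is reconciled by noting that the graph-theoretic relations in Equations~\eqref{graph correct eqn 3} and \eqref{graph correct eqn 4} only involve the direction-insensitive quantities $a-b$ and looped vertices, both of which are symmetric under swapping the endpoints of an edge.
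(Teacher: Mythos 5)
Your proposal is correct and follows essentially the same route as the paper: a case split on whether the conjugate error $\omega^\kappa D_{a,b}$ corresponds to a loop ($a=b$) or a simple edge ($a\neq b$), with membership in $S_C^{C_1}$ read off from the generating set in Equation~\eqref{refl stab}. The only difference is that you make explicit the normal form $\omega^\kappa D_{x,y}$ with $y\in C^\perp$ and $x-y\in C_1$, a step the paper's proof leaves implicit.
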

\begin{proof}
Fix a conjugate error $E=\omega^\kappa D_{a,b}\in\mathscr{E}^2$. 
First, suppose that $a\neq b$ and therefore $(a,b)\in\hat E_\mathscr{E}$. 
Then, by Equation~\eqref{refl stab}, $E\in S_C^{C_1}$ exactly when $a-b\in C_1$. Equation~\eqref{graph correct eqn 3} is exactly this condition. 

Next, suppose that $a=b$. Then, by Equation~\eqref{refl stab}, $E\in S_C^{C_1}$ exactly when $a\in C^\perp$. Equation~\eqref{graph correct eqn 4} is exactly this condition. 
\end{proof}

Taking Lemma~\ref{graph correct lemma 2} and the fact that $\overline 0\in C_1$ from Lemma~\ref{graph correct lemma 2} yields Theorem~\ref{graph correct}. 
Below is the most general version of Theorem~\ref{graph correct}.

\begin{thm}\label{graph correct thm 2}
Let $d=p^m$ for some prime $p$, $C_1\subset C\subset\mathbb{F}_d^n$ be linear subspaces, and $G_C^{C_1}$ be a LUC graph. Then $R^C_{C_1}$ can correct any error from an error set $\mathscr{E}$, if 
\begin{equation}\label{graph correct eqn 1}
\hat E_C^{C_1}\cap \hat{E}_\mathscr{E}\subseteq\{(a,b)\,|\,a\nin C_1^\perp\;\text{or}\;a-b\in C_1\}
\end{equation}
and 
\begin{equation}\label{graph correct eqn 2}
\mathscr{L}_\mathscr{E}\cap C^\perp_1\subseteq C^\perp. %\{0\}. 
%\left.\Big\{E_1\inv E_2\, \right|\  E_1,E_2\in\mathscr{E},\,\kappa\in\{0,...,p-1\}\Big\}
\end{equation}
\end{thm}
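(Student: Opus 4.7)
The plan is to deduce Theorem~\ref{graph correct thm 2} from Theorem~\ref{errorcorrecting} by using Lemmas~\ref{graph correct lemma 1} and \ref{graph correct lemma 2} to translate membership in $S_C^{C_1}$ and its centralizer into edge conditions on $G_C^{C_1}$ and $G_\mathscr{E}$, and then checking that the two hypotheses exactly cover every conjugate error. Specifically, it suffices to show that each $E=\omega^\kappa D_{a,b}\in\mathscr{E}^2$ satisfies $E\in S_C^{C_1}$ or $E\notin\bigl(S_C^{C_1}\bigr)^\perp$, so the proof is organized around a case split on the edge $(a,b)$.

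First I would dispose of the trivial case in which $E$ is a scalar multiple of the identity: then $E\in\mathcal{Z}\subset S_C^{C_1}$ automatically, and the criterion of Theorem~\ref{errorcorrecting} is satisfied. Otherwise, $(a,b)$ is a non-trivial edge of $G_\mathscr{E}$, and I would split into the simple-edge case $a\neq b$ (so $(a,b)\in\hat E_\mathscr{E}$) and the loop case $a=b\neq\overline{0}$ (so $a\in\mathscr{L}_\mathscr{E}$). Within each case I would further split on whether $(a,b)\in E_C^{C_1}$. If not, Lemma~\ref{graph correct lemma 2} applies immediately: for simple edges $b-a\notin C$ forces $E\notin\bigl(S_C^{C_1}\bigr)^\perp$ via Lemma~\ref{sperp}, and for loops $a\notin C_1^\perp$ does the same.

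The core of the argument is the remaining case when $(a,b)$ is a common edge of the two graphs. For simple edges common to $G_C^{C_1}$ and $G_\mathscr{E}$, hypothesis~\eqref{graph correct eqn 1} gives the alternative $a\notin C_1^\perp$ or $a-b\in C_1$. In the first alternative, Lemma~\ref{graph correct lemma 2} yields $E\notin\bigl(S_C^{C_1}\bigr)^\perp$; in the second, Lemma~\ref{graph correct lemma 1} yields $E\in S_C^{C_1}$. For loops common to both graphs, I would separate on whether $a\in C_1^\perp$. If $a\notin C_1^\perp$, then $E\notin\bigl(S_C^{C_1}\bigr)^\perp$ by Lemma~\ref{graph correct lemma 2}; if $a\in C_1^\perp$, then hypothesis~\eqref{graph correct eqn 2} forces $a\in C^\perp$, and the loop clause of Lemma~\ref{graph correct lemma 1} gives $E\in S_C^{C_1}$. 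In every case the dichotomy required by Theorem~\ref{errorcorrecting} holds, so $R_C^{C_1}$ corrects every error in $\mathscr{E}$.

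The main obstacle is not any single computation but the bookkeeping: one has to verify that the two hypotheses are matched precisely to the only configurations left unresolved by the two lemmas, namely common simple edges with $a\in C_1^\perp$ and $a-b\notin C_1$, and common non-trivial loops at vertices in $C_1^\perp\setminus C^\perp$. Once this partition is laid out explicitly, the theorem follows as a direct corollary of the preceding lemmas together with Theorem~\ref{errorcorrecting}, with no further algebra required. Note also that Theorem~\ref{graph correct} is recovered by strengthening hypothesis~\eqref{graph correct eqn 1} to the absence of such edges outside of those incident to $\overline{0}$, which automatically implies hypothesis~\eqref{graph correct eqn 2}.
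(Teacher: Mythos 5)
Your proposal is correct and follows essentially the same route as the paper: the paper presents this theorem immediately after Lemmas~\ref{graph correct lemma 1} and~\ref{graph correct lemma 2} as their direct combination with Theorem~\ref{errorcorrecting}, which is precisely the case analysis (scalar errors, simple edges, loops, split on common versus avoided edges) that you carry out. Your write-up is in fact more explicit than the paper's one-line justification, and your closing remark on how Theorem~\ref{graph correct} is recovered matches the paper's stated relationship between the two results.
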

% \begin{proof}

% % The first property follows directly from Lemma ~\ref{sperp} and Theorem ~\ref{errorcorrecting}, as the set given in the first property is simply what it takes for a conjugate error to not be in the centralizer of $S^C_{C_1}$. Further, for any $a\in\mathscr{L}_\mathscr{E}$ then $D_{aa}\in\left(S^C_{C_1}\right)^\perp$, thus by Lemma ~\ref{sperp} thus the second condition must also hold.
% \end{proof}

\section{Comparison to CSS Codes}\label{app: benchmark}

 Reflexive stabilizer codes and CSS codes are in one-to-one correspondence via a change in error basis. That is, choosing a basis for the error group  which is instead generated by $Y$ and $Z$ rather than $X$ and $Y$ from Example~\ref{qubit basis}. Specifically, if we define:
\begin{equation}
\tilde{D}_{\overline{a},\overline{b}}=\bigotimes_{i=1}^n Y^{a_i}Z^{b_i},
\end{equation}
a natural isomorphism arises between the two codes. By maintaining the standard basis on $\mathbb{F}_2^{2n}$, i.e. the parameter space of the Error Basis, we can build a linear isomorphism
\begin{align}\label{boom}
\varphi:\mathbb{F}_2^{2n}&\rightarrow\mathbb{F}_2^{2n}\\
(a,a)&\mapsto (a,0)\\
(b,0)&\mapsto (0,b),
\end{align}
which induces an automorhpism on $\mathcal{E}_n$, $\Phi:\mathcal{E}_n\rightarrow \mathcal{E}_n$. It is now a simple exercise to show that the above isomorphism takes the stabilizer for a reflexive stabilizer code $R^C_{C_1}$ (with $d=2$) to a stabilizer of a CSS code. Further, one easily verifies that $\lan a,b\ran=0$ if and only if $\lan \varphi(a),\varphi(b)\ran=0$, and thus $R_C^{C_1}$, with stabilizer $S$, can correct any error from $\mathscr{E}$ if and only if $\Phi(S)$ induces a CSS code which corrects any error from $\Phi(\mathscr{E})$ for any error set $\mathscr{E}$.

The traditional approaches of error correction for CSS codes is that of considering $t$ single qubit flip, phase, or phase-flip errors. That is, we consider the error set:
\begin{equation}
\mathscr{E}=\{D_{p,0},D_{0,p},D_{p,p}\}
\end{equation}
where $p=\sum a_ie_i$ where no more than $t$ of the $a_i=1$ and rest are zero, i.e. the error set of at most $t$-flips, $t$-phases, and $t$-phase-flips. For this error set we note the following relationship
\begin{equation}
\Phi(\mathscr{E})=\mathscr{E},
\end{equation}
and hence the reflexive code obtained as an image of a CSS code under the automorphism $\Phi\inv$ corrects the same $t$ single qubit errors. Therefore the next theorem follows directly from the isomorphism in Equation~\ref{boom} and Theorem 1 in \cite{goodcodes}.
\begin{thm}\label{error ext}
Let $C_2\subset C_1\subset\mathbb{F}_d^n$ be linear subspaces, then the code $R_{C_1}^{C_2}$ can both correct up to $t$ single-qubit flip, phase, or phase-flip errors where
\begin{equation}
t=\op{min}\left\{\left\lfloor\frac{\op{wt}(C_1)-1}{2}\right\rfloor,\left\lfloor\frac{\op{wt}\left(C_2^\perp\setminus C_1\right)-1}{2}\right\rfloor\right\}.
\end{equation}
\end{thm}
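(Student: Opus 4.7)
The plan is to leverage the isomorphism $\Phi:\mathcal{E}_n\to\mathcal{E}_n$ between reflexive stabilizer codes and CSS codes developed in the discussion preceding the theorem, and then import the stated bound directly from the corresponding CSS result in \cite{goodcodes}.

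First, I would apply the isomorphism $\Phi$ induced by $\varphi$ in Equation~\eqref{boom} to send the stabilizer of $R_{C_1}^{C_2}$ to the stabilizer of a CSS code built from the pair $(C_2, C_1)$. The preceding discussion already establishes that $\Phi$ preserves the inner product governing commutativity, so that $R_{C_1}^{C_2}$ corrects an error set $\mathscr{E}$ if and only if the image CSS code corrects $\Phi(\mathscr{E})$. What remains at this step is to verify that $\Phi$ really maps the subspace data defining the reflexive stabilizer (via Equation~\eqref{refl stab}) to the subspace data $(C_2, C_1)$ defining the target CSS code; since $\varphi$ trades the diagonal slot $(a,a)$ for the $X$-type slot $(a,0)$ and the $X$-type slot $(b,0)$ for the $Z$-type slot $(0,b)$, this is a direct linear-algebra calculation.

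Second, I would verify that the error set of all at most $t$ single-qubit flip, phase, and phase-flip errors is invariant under $\Phi$. Writing $\mathscr{E}=\{D_{p,0}, D_{0,p}, D_{p,p}\}$ with $\op{w}(p)\le t$, one reads off from Equation~\eqref{boom} that $\varphi$ cyclically permutes the three types: $(b,0)\mapsto(0,b)$ sends flips to phases; by linearity $(0,b)=(b,b)-(b,0)\mapsto(b,b)$ sends phases to phase-flips; and $(a,a)\mapsto(a,0)$ sends phase-flips to flips. Hence $\Phi(\mathscr{E})=\mathscr{E}$, so the single-qubit correction property transfers across the isomorphism without modification.

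Third, I would invoke Theorem~1 of \cite{goodcodes}, which gives precisely the bound
\begin{equation*}
t=\op{min}\left\{\left\lfloor\frac{\op{wt}(C_1)-1}{2}\right\rfloor,\left\lfloor\frac{\op{wt}(C_2^\perp\setminus C_1)-1}{2}\right\rfloor\right\}
\end{equation*}
for the number of single-qubit errors that a CSS code built from $C_2\subset C_1$ can correct. Combining the previous two steps, the same bound applies to $R_{C_1}^{C_2}$. The main obstacle is the first step: one must check cleanly that the image of the reflexive stabilizer under $\Phi$ is the CSS stabilizer parameterized by the same pair $(C_2, C_1)$, and that $\Phi$ commutes with taking centralizers in the relevant sense, so that the weight invariants $\op{wt}(C_1)$ and $\op{wt}(C_2^\perp\setminus C_1)$ appearing on the reflexive side genuinely coincide with those in the CSS theorem. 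Once this identification is pinned down, everything else is a mechanical transfer through the isomorphism.
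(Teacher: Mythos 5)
Your proposal follows essentially the same route as the paper: the paper's proof consists precisely of the isomorphism $\Phi$ induced by $\varphi$ in Equation~\eqref{boom}, the observation that $\Phi(\mathscr{E})=\mathscr{E}$ for the set of at most $t$ single-qubit flip, phase, and phase-flip errors, and a direct appeal to Theorem~1 of \cite{goodcodes}. Your explicit check that $\varphi$ cyclically permutes the three error types (using $(0,b)=(b,b)-(b,0)$ over $\mathbb{F}_2$) makes precise a step the paper leaves implicit, but the argument is the same.
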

Additionally, by the same relationship we arrive at the following result from Theorem 2 in \cite{ortho}. 
\begin{thm}
There exists a family of reflexive stabilizer codes with asymptotic rate 
\begin{equation}
R=1-2\delta\log_2(3)-H_2(2\delta)
\end{equation}
where $\delta$ is the fraction of qubits that are subject to decoherence and $H_2(\delta)=-\delta\log_2(\delta)-(1-\delta)\log_2(1-\delta)$ is the binary entropy function.
\end{thm}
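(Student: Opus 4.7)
The plan is to deduce this theorem directly from the correspondence developed earlier in Appendix~\ref{app: benchmark} together with the analogous asymptotic existence result for CSS codes, which appears as Theorem~2 of \cite{ortho}. First I would recall that the map $\varphi$ of Equation~\eqref{boom} is a linear isomorphism of $\mathbb{F}_2^{2n}$ that induces an automorphism $\Phi$ of the $n$-qubit error group $\mathcal{E}_n$, and that this automorphism carries every reflexive stabilizer $S_C^{C_1}$ (for $C_1\subset C\subset \mathbb{F}_2^n$ linear) to a CSS stabilizer, and conversely. This is the pivot of the whole argument: everything I want to prove about RSCs will be transported from the corresponding CSS statement.

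Next I would verify the two quantities that enter the asymptotic rate are preserved under $\Phi$. For the encoding rate: since $\Phi$ is induced by a linear isomorphism of $\mathbb{F}_2^{2n}$, the image $\Phi(S_C^{C_1})$ has the same $\mathbb{F}_2$-dimension as $S_C^{C_1}$, so the ratio $k/n = (\dim C - \dim C_1)/n$ is identical for a reflexive stabilizer code and its CSS image. For the correctable error set: the identity $\langle a,b\rangle = 0 \Leftrightarrow \langle \varphi(a),\varphi(b)\rangle = 0$ noted after Equation~\eqref{boom} shows $\Phi$ preserves symplectic orthogonality, and the computation $\Phi(\mathscr{E})=\mathscr{E}$ already carried out in the appendix shows that the set of at most $t$ single-qubit flip/phase/phase-flip errors is setwise fixed by $\Phi$. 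Consequently, $R_C^{C_1}$ corrects this error set if and only if the CSS code with stabilizer $\Phi(S_C^{C_1})$ does.

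The final step is to invoke Theorem~2 of \cite{ortho}, which asserts the existence of a family of CSS codes whose asymptotic rate is precisely $R = 1 - 2\delta \log_2 3 - H_2(2\delta)$ and which correct a fraction $\delta$ of qubits subject to decoherence. Pulling this family back via $\Phi^{-1}$ produces a family of reflexive stabilizer codes with the same block lengths, the same logical dimensions, and by the preceding paragraph the same correction guarantees, hence the same asymptotic rate.

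The only real obstacle is a bookkeeping check: I must confirm that $\Phi^{-1}$ sends each CSS stabilizer in the cited family to a stabilizer of reflexive form $\langle D_{a,a}, D_{b,0} : a\in C^\perp,\, b\in C_1\rangle$, rather than merely to some commutative subgroup of $\mathcal{E}_n$. This follows from inspecting $\varphi^{-1}$ on the generators of a CSS stabilizer, which maps pure-$X$ generators to generators of the form $D_{b,0}$ and pure-$Z$ generators to generators of the form $D_{a,a}$; the linear subspaces $C$ and $C_1$ are then read off as the projections onto the two $\mathbb{F}_2^n$ factors of the resulting parameter set. Once this alignment is confirmed, the asymptotic rate statement transfers verbatim and the theorem is established.
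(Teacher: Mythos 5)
Your proposal is correct and follows essentially the same route as the paper, which likewise obtains this theorem by transporting Theorem~2 of \cite{ortho} through the automorphism $\Phi$ of Equation~\eqref{boom}; the paper's own justification is a one-line appeal to ``the same relationship,'' so your more detailed verification that $\Phi$ preserves the rate, the symplectic form, and the single-qubit error set (and that $\Phi^{-1}$ lands on stabilizers of reflexive form) simply makes explicit what the paper leaves implicit.
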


\bibliography{references}

%apsrev4-2.bst 2019-01-14 (MD) hand-edited version of apsrev4-1.bst
%Control: key (0)
%Control: author (8) initials jnrlst
%Control: editor formatted (1) identically to author
%Control: production of article title (0) allowed
%Control: page (0) single
%Control: year (1) truncated
%Control: production of eprint (0) enabled
\begin{thebibliography}{34}%
\makeatletter
\providecommand \@ifxundefined [1]{%
 \@ifx{#1\undefined}
}%
\providecommand \@ifnum [1]{%
 \ifnum #1\expandafter \@firstoftwo
 \else \expandafter \@secondoftwo
 \fi
}%
\providecommand \@ifx [1]{%
 \ifx #1\expandafter \@firstoftwo
 \else \expandafter \@secondoftwo
 \fi
}%
\providecommand \natexlab [1]{#1}%
\providecommand \enquote  [1]{``#1''}%
\providecommand \bibnamefont  [1]{#1}%
\providecommand \bibfnamefont [1]{#1}%
\providecommand \citenamefont [1]{#1}%
\providecommand \href@noop [0]{\@secondoftwo}%
\providecommand \href [0]{\begingroup \@sanitize@url \@href}%
\providecommand \@href[1]{\@@startlink{#1}\@@href}%
\providecommand \@@href[1]{\endgroup#1\@@endlink}%
\providecommand \@sanitize@url [0]{\catcode `\\12\catcode `\$12\catcode
  `\&12\catcode `\#12\catcode `\^12\catcode `\_12\catcode `\%12\relax}%
\providecommand \@@startlink[1]{}%
\providecommand \@@endlink[0]{}%
\providecommand \url  [0]{\begingroup\@sanitize@url \@url }%
\providecommand \@url [1]{\endgroup\@href {#1}{\urlprefix }}%
\providecommand \urlprefix  [0]{URL }%
\providecommand \Eprint [0]{\href }%
\providecommand \doibase [0]{https://doi.org/}%
\providecommand \selectlanguage [0]{\@gobble}%
\providecommand \bibinfo  [0]{\@secondoftwo}%
\providecommand \bibfield  [0]{\@secondoftwo}%
\providecommand \translation [1]{[#1]}%
\providecommand \BibitemOpen [0]{}%
\providecommand \bibitemStop [0]{}%
\providecommand \bibitemNoStop [0]{.\EOS\space}%
\providecommand \EOS [0]{\spacefactor3000\relax}%
\providecommand \BibitemShut  [1]{\csname bibitem#1\endcsname}%
\let\auto@bib@innerbib\@empty
%</preamble>
\bibitem [{\citenamefont {Huffman}\ and\ \citenamefont
  {Pless}(2010)}]{eccfundamentals}%
  \BibitemOpen
  \bibfield  {author} {\bibinfo {author} {\bibfnamefont {W.~C.}\ \bibnamefont
  {Huffman}}\ and\ \bibinfo {author} {\bibfnamefont {V.}~\bibnamefont
  {Pless}},\ }\href@noop {} {\emph {\bibinfo {title} {Fundamentals of
  error-correcting codes}}}\ (\bibinfo  {publisher} {Cambridge university
  press},\ \bibinfo {year} {2010})\BibitemShut {NoStop}%
\bibitem [{\citenamefont {Steane}(1998)}]{ste98}%
  \BibitemOpen
  \bibfield  {author} {\bibinfo {author} {\bibfnamefont {A.~M.}\ \bibnamefont
  {Steane}},\ }\bibfield  {title} {\bibinfo {title} {Quantum error
  correction},\ }in\ \href@noop {} {\emph {\bibinfo {booktitle} {Introduction
  to quantum computation and information}}}\ (\bibinfo  {publisher} {World
  Scientific},\ \bibinfo {year} {1998})\ pp.\ \bibinfo {pages}
  {184--212}\BibitemShut {NoStop}%
\bibitem [{\citenamefont {Knill}\ and\ \citenamefont
  {Laflamme}(1997)}]{errortheory}%
  \BibitemOpen
  \bibfield  {author} {\bibinfo {author} {\bibfnamefont {E.}~\bibnamefont
  {Knill}}\ and\ \bibinfo {author} {\bibfnamefont {R.}~\bibnamefont
  {Laflamme}},\ }\bibfield  {title} {\bibinfo {title} {Theory of quantum
  error-correcting codes},\ }\href@noop {} {\bibfield  {journal} {\bibinfo
  {journal} {Physical Review A}\ }\textbf {\bibinfo {volume} {55}},\ \bibinfo
  {pages} {900} (\bibinfo {year} {1997})}\BibitemShut {NoStop}%
\bibitem [{\citenamefont {Nielsen}\ and\ \citenamefont
  {Chuang}(2002)}]{nielsen2002quantum}%
  \BibitemOpen
  \bibfield  {author} {\bibinfo {author} {\bibfnamefont {M.~A.}\ \bibnamefont
  {Nielsen}}\ and\ \bibinfo {author} {\bibfnamefont {I.}~\bibnamefont
  {Chuang}},\ }\href@noop {} {\bibinfo {title} {Quantum computation and quantum
  information}} (\bibinfo {year} {2002})\BibitemShut {NoStop}%
\bibitem [{\citenamefont {Calderbank}\ \emph {et~al.}(1997)\citenamefont
  {Calderbank}, \citenamefont {Rains}, \citenamefont {Shor},\ and\
  \citenamefont {Sloane}}]{ortho}%
  \BibitemOpen
  \bibfield  {author} {\bibinfo {author} {\bibfnamefont {A.~R.}\ \bibnamefont
  {Calderbank}}, \bibinfo {author} {\bibfnamefont {E.~M.}\ \bibnamefont
  {Rains}}, \bibinfo {author} {\bibfnamefont {P.~W.}\ \bibnamefont {Shor}},\
  and\ \bibinfo {author} {\bibfnamefont {N.~J.}\ \bibnamefont {Sloane}},\
  }\bibfield  {title} {\bibinfo {title} {Quantum error correction and
  orthogonal geometry},\ }\href@noop {} {\bibfield  {journal} {\bibinfo
  {journal} {Physical Review Letters}\ }\textbf {\bibinfo {volume} {78}},\
  \bibinfo {pages} {405} (\bibinfo {year} {1997})}\BibitemShut {NoStop}%
\bibitem [{\citenamefont {Ashikhmin}\ and\ \citenamefont
  {Knill}(2001)}]{nonbinary}%
  \BibitemOpen
  \bibfield  {author} {\bibinfo {author} {\bibfnamefont {A.}~\bibnamefont
  {Ashikhmin}}\ and\ \bibinfo {author} {\bibfnamefont {E.}~\bibnamefont
  {Knill}},\ }\bibfield  {title} {\bibinfo {title} {Nonbinary quantum
  stabilizer codes},\ }\href@noop {} {\bibfield  {journal} {\bibinfo  {journal}
  {IEEE Transactions on Information Theory}\ }\textbf {\bibinfo {volume}
  {47}},\ \bibinfo {pages} {3065} (\bibinfo {year} {2001})}\BibitemShut
  {NoStop}%
\bibitem [{\citenamefont {Barends}\ \emph {et~al.}(2014)\citenamefont
  {Barends}, \citenamefont {Kelly}, \citenamefont {Megrant}, \citenamefont
  {Veitia}, \citenamefont {Sank}, \citenamefont {Jeffrey}, \citenamefont
  {White}, \citenamefont {Mutus}, \citenamefont {Fowler}, \citenamefont
  {Campbell} \emph {et~al.}}]{surf1}%
  \BibitemOpen
  \bibfield  {author} {\bibinfo {author} {\bibfnamefont {R.}~\bibnamefont
  {Barends}}, \bibinfo {author} {\bibfnamefont {J.}~\bibnamefont {Kelly}},
  \bibinfo {author} {\bibfnamefont {A.}~\bibnamefont {Megrant}}, \bibinfo
  {author} {\bibfnamefont {A.}~\bibnamefont {Veitia}}, \bibinfo {author}
  {\bibfnamefont {D.}~\bibnamefont {Sank}}, \bibinfo {author} {\bibfnamefont
  {E.}~\bibnamefont {Jeffrey}}, \bibinfo {author} {\bibfnamefont {T.~C.}\
  \bibnamefont {White}}, \bibinfo {author} {\bibfnamefont {J.}~\bibnamefont
  {Mutus}}, \bibinfo {author} {\bibfnamefont {A.~G.}\ \bibnamefont {Fowler}},
  \bibinfo {author} {\bibfnamefont {B.}~\bibnamefont {Campbell}}, \emph
  {et~al.},\ }\bibfield  {title} {\bibinfo {title} {Superconducting quantum
  circuits at the surface code threshold for fault tolerance},\ }\href@noop {}
  {\bibfield  {journal} {\bibinfo  {journal} {Nature}\ }\textbf {\bibinfo
  {volume} {508}},\ \bibinfo {pages} {500} (\bibinfo {year}
  {2014})}\BibitemShut {NoStop}%
\bibitem [{\citenamefont {Jeffrey}\ \emph {et~al.}(2014)\citenamefont
  {Jeffrey}, \citenamefont {Sank}, \citenamefont {Mutus}, \citenamefont
  {White}, \citenamefont {Kelly}, \citenamefont {Barends}, \citenamefont
  {Chen}, \citenamefont {Chen}, \citenamefont {Chiaro}, \citenamefont
  {Dunsworth} \emph {et~al.}}]{surf2}%
  \BibitemOpen
  \bibfield  {author} {\bibinfo {author} {\bibfnamefont {E.}~\bibnamefont
  {Jeffrey}}, \bibinfo {author} {\bibfnamefont {D.}~\bibnamefont {Sank}},
  \bibinfo {author} {\bibfnamefont {J.}~\bibnamefont {Mutus}}, \bibinfo
  {author} {\bibfnamefont {T.}~\bibnamefont {White}}, \bibinfo {author}
  {\bibfnamefont {J.}~\bibnamefont {Kelly}}, \bibinfo {author} {\bibfnamefont
  {R.}~\bibnamefont {Barends}}, \bibinfo {author} {\bibfnamefont
  {Y.}~\bibnamefont {Chen}}, \bibinfo {author} {\bibfnamefont {Z.}~\bibnamefont
  {Chen}}, \bibinfo {author} {\bibfnamefont {B.}~\bibnamefont {Chiaro}},
  \bibinfo {author} {\bibfnamefont {A.}~\bibnamefont {Dunsworth}}, \emph
  {et~al.},\ }\bibfield  {title} {\bibinfo {title} {Fast accurate state
  measurement with superconducting qubits},\ }\href@noop {} {\bibfield
  {journal} {\bibinfo  {journal} {Physical review letters}\ }\textbf {\bibinfo
  {volume} {112}},\ \bibinfo {pages} {190504} (\bibinfo {year}
  {2014})}\BibitemShut {NoStop}%
\bibitem [{\citenamefont {Opremcak}\ \emph {et~al.}(2021)\citenamefont
  {Opremcak}, \citenamefont {Liu}, \citenamefont {Wilen}, \citenamefont
  {Okubo}, \citenamefont {Christensen}, \citenamefont {Sank}, \citenamefont
  {White}, \citenamefont {Vainsencher}, \citenamefont {Giustina}, \citenamefont
  {Megrant} \emph {et~al.}}]{surf3}%
  \BibitemOpen
  \bibfield  {author} {\bibinfo {author} {\bibfnamefont {A.}~\bibnamefont
  {Opremcak}}, \bibinfo {author} {\bibfnamefont {C.}~\bibnamefont {Liu}},
  \bibinfo {author} {\bibfnamefont {C.}~\bibnamefont {Wilen}}, \bibinfo
  {author} {\bibfnamefont {K.}~\bibnamefont {Okubo}}, \bibinfo {author}
  {\bibfnamefont {B.}~\bibnamefont {Christensen}}, \bibinfo {author}
  {\bibfnamefont {D.}~\bibnamefont {Sank}}, \bibinfo {author} {\bibfnamefont
  {T.}~\bibnamefont {White}}, \bibinfo {author} {\bibfnamefont
  {A.}~\bibnamefont {Vainsencher}}, \bibinfo {author} {\bibfnamefont
  {M.}~\bibnamefont {Giustina}}, \bibinfo {author} {\bibfnamefont
  {A.}~\bibnamefont {Megrant}}, \emph {et~al.},\ }\bibfield  {title} {\bibinfo
  {title} {High-fidelity measurement of a superconducting qubit using an
  on-chip microwave photon counter},\ }\href@noop {} {\bibfield  {journal}
  {\bibinfo  {journal} {Physical Review X}\ }\textbf {\bibinfo {volume} {11}},\
  \bibinfo {pages} {011027} (\bibinfo {year} {2021})}\BibitemShut {NoStop}%
\bibitem [{\citenamefont {Wilen}\ \emph {et~al.}(2021)\citenamefont {Wilen},
  \citenamefont {Abdullah}, \citenamefont {Kurinsky}, \citenamefont {Stanford},
  \citenamefont {Cardani}, \citenamefont {d’Imperio}, \citenamefont {Tomei},
  \citenamefont {Faoro}, \citenamefont {Ioffe}, \citenamefont {Liu} \emph
  {et~al.}}]{correlated}%
  \BibitemOpen
  \bibfield  {author} {\bibinfo {author} {\bibfnamefont {C.}~\bibnamefont
  {Wilen}}, \bibinfo {author} {\bibfnamefont {S.}~\bibnamefont {Abdullah}},
  \bibinfo {author} {\bibfnamefont {N.}~\bibnamefont {Kurinsky}}, \bibinfo
  {author} {\bibfnamefont {C.}~\bibnamefont {Stanford}}, \bibinfo {author}
  {\bibfnamefont {L.}~\bibnamefont {Cardani}}, \bibinfo {author} {\bibfnamefont
  {G.}~\bibnamefont {d’Imperio}}, \bibinfo {author} {\bibfnamefont
  {C.}~\bibnamefont {Tomei}}, \bibinfo {author} {\bibfnamefont
  {L.}~\bibnamefont {Faoro}}, \bibinfo {author} {\bibfnamefont
  {L.}~\bibnamefont {Ioffe}}, \bibinfo {author} {\bibfnamefont
  {C.}~\bibnamefont {Liu}}, \emph {et~al.},\ }\bibfield  {title} {\bibinfo
  {title} {Correlated charge noise and relaxation errors in superconducting
  qubits},\ }\href@noop {} {\bibfield  {journal} {\bibinfo  {journal} {Nature}\
  }\textbf {\bibinfo {volume} {594}},\ \bibinfo {pages} {369} (\bibinfo {year}
  {2021})}\BibitemShut {NoStop}%
\bibitem [{\citenamefont {Clemens}\ \emph {et~al.}(2004)\citenamefont
  {Clemens}, \citenamefont {Siddiqui},\ and\ \citenamefont
  {Gea-Banacloche}}]{temporalcorrelation}%
  \BibitemOpen
  \bibfield  {author} {\bibinfo {author} {\bibfnamefont {J.~P.}\ \bibnamefont
  {Clemens}}, \bibinfo {author} {\bibfnamefont {S.}~\bibnamefont {Siddiqui}},\
  and\ \bibinfo {author} {\bibfnamefont {J.}~\bibnamefont {Gea-Banacloche}},\
  }\bibfield  {title} {\bibinfo {title} {Quantum error correction against
  correlated noise},\ }\href@noop {} {\bibfield  {journal} {\bibinfo  {journal}
  {Physical Review A}\ }\textbf {\bibinfo {volume} {69}},\ \bibinfo {pages}
  {062313} (\bibinfo {year} {2004})}\BibitemShut {NoStop}%
\bibitem [{\citenamefont {Krantz}\ \emph {et~al.}(2019)\citenamefont {Krantz},
  \citenamefont {Kjaergaard}, \citenamefont {Yan}, \citenamefont {Orlando},
  \citenamefont {Gustavsson},\ and\ \citenamefont {Oliver}}]{superconducting}%
  \BibitemOpen
  \bibfield  {author} {\bibinfo {author} {\bibfnamefont {P.}~\bibnamefont
  {Krantz}}, \bibinfo {author} {\bibfnamefont {M.}~\bibnamefont {Kjaergaard}},
  \bibinfo {author} {\bibfnamefont {F.}~\bibnamefont {Yan}}, \bibinfo {author}
  {\bibfnamefont {T.~P.}\ \bibnamefont {Orlando}}, \bibinfo {author}
  {\bibfnamefont {S.}~\bibnamefont {Gustavsson}},\ and\ \bibinfo {author}
  {\bibfnamefont {W.~D.}\ \bibnamefont {Oliver}},\ }\bibfield  {title}
  {\bibinfo {title} {A quantum engineer's guide to superconducting qubits},\
  }\href@noop {} {\bibfield  {journal} {\bibinfo  {journal} {Applied Physics
  Reviews}\ }\textbf {\bibinfo {volume} {6}},\ \bibinfo {pages} {021318}
  (\bibinfo {year} {2019})}\BibitemShut {NoStop}%
\bibitem [{\citenamefont {Jacak}\ \emph {et~al.}(2013)\citenamefont {Jacak},
  \citenamefont {Hawrylak},\ and\ \citenamefont {Wojs}}]{dots}%
  \BibitemOpen
  \bibfield  {author} {\bibinfo {author} {\bibfnamefont {L.}~\bibnamefont
  {Jacak}}, \bibinfo {author} {\bibfnamefont {P.}~\bibnamefont {Hawrylak}},\
  and\ \bibinfo {author} {\bibfnamefont {A.}~\bibnamefont {Wojs}},\ }\href@noop
  {} {\emph {\bibinfo {title} {Quantum dots}}}\ (\bibinfo  {publisher}
  {Springer Science \& Business Media},\ \bibinfo {year} {2013})\BibitemShut
  {NoStop}%
\bibitem [{\citenamefont {Bruzewicz}\ \emph {et~al.}(2019)\citenamefont
  {Bruzewicz}, \citenamefont {Chiaverini}, \citenamefont {McConnell},\ and\
  \citenamefont {Sage}}]{trappedions}%
  \BibitemOpen
  \bibfield  {author} {\bibinfo {author} {\bibfnamefont {C.~D.}\ \bibnamefont
  {Bruzewicz}}, \bibinfo {author} {\bibfnamefont {J.}~\bibnamefont
  {Chiaverini}}, \bibinfo {author} {\bibfnamefont {R.}~\bibnamefont
  {McConnell}},\ and\ \bibinfo {author} {\bibfnamefont {J.~M.}\ \bibnamefont
  {Sage}},\ }\bibfield  {title} {\bibinfo {title} {Trapped-ion quantum
  computing: Progress and challenges},\ }\href@noop {} {\bibfield  {journal}
  {\bibinfo  {journal} {Applied Physics Reviews}\ }\textbf {\bibinfo {volume}
  {6}},\ \bibinfo {pages} {021314} (\bibinfo {year} {2019})}\BibitemShut
  {NoStop}%
\bibitem [{\citenamefont {O'brien}(2007)}]{photons}%
  \BibitemOpen
  \bibfield  {author} {\bibinfo {author} {\bibfnamefont {J.~L.}\ \bibnamefont
  {O'brien}},\ }\bibfield  {title} {\bibinfo {title} {Optical quantum
  computing},\ }\href@noop {} {\bibfield  {journal} {\bibinfo  {journal}
  {Science}\ }\textbf {\bibinfo {volume} {318}},\ \bibinfo {pages} {1567}
  (\bibinfo {year} {2007})}\BibitemShut {NoStop}%
\bibitem [{\citenamefont {Gyongyosi}\ and\ \citenamefont
  {Imre}(2019)}]{quantumcomputing}%
  \BibitemOpen
  \bibfield  {author} {\bibinfo {author} {\bibfnamefont {L.}~\bibnamefont
  {Gyongyosi}}\ and\ \bibinfo {author} {\bibfnamefont {S.}~\bibnamefont
  {Imre}},\ }\bibfield  {title} {\bibinfo {title} {A survey on quantum
  computing technology},\ }\href@noop {} {\bibfield  {journal} {\bibinfo
  {journal} {Computer Science Review}\ }\textbf {\bibinfo {volume} {31}},\
  \bibinfo {pages} {51} (\bibinfo {year} {2019})}\BibitemShut {NoStop}%
\bibitem [{\citenamefont {Corrigan}\ \emph {et~al.}(2021)\citenamefont
  {Corrigan}, \citenamefont {Dodson}, \citenamefont {Ercan}, \citenamefont
  {Abadillo-Uriel}, \citenamefont {Thorgrimsson}, \citenamefont {Knapp},
  \citenamefont {Holman}, \citenamefont {McJunkin}, \citenamefont {Neyens},
  \citenamefont {MacQuarrie} \emph {et~al.}}]{quditsnotqubits}%
  \BibitemOpen
  \bibfield  {author} {\bibinfo {author} {\bibfnamefont {J.}~\bibnamefont
  {Corrigan}}, \bibinfo {author} {\bibfnamefont {J.}~\bibnamefont {Dodson}},
  \bibinfo {author} {\bibfnamefont {H.~E.}\ \bibnamefont {Ercan}}, \bibinfo
  {author} {\bibfnamefont {J.}~\bibnamefont {Abadillo-Uriel}}, \bibinfo
  {author} {\bibfnamefont {B.}~\bibnamefont {Thorgrimsson}}, \bibinfo {author}
  {\bibfnamefont {T.}~\bibnamefont {Knapp}}, \bibinfo {author} {\bibfnamefont
  {N.}~\bibnamefont {Holman}}, \bibinfo {author} {\bibfnamefont
  {T.}~\bibnamefont {McJunkin}}, \bibinfo {author} {\bibfnamefont {S.~F.}\
  \bibnamefont {Neyens}}, \bibinfo {author} {\bibfnamefont {E.}~\bibnamefont
  {MacQuarrie}}, \emph {et~al.},\ }\bibfield  {title} {\bibinfo {title}
  {Coherent control and spectroscopy of a semiconductor quantum dot wigner
  molecule},\ }\href@noop {} {\bibfield  {journal} {\bibinfo  {journal}
  {Physical Review Letters}\ }\textbf {\bibinfo {volume} {127}},\ \bibinfo
  {pages} {127701} (\bibinfo {year} {2021})}\BibitemShut {NoStop}%
\bibitem [{\citenamefont {Bell}\ \emph {et~al.}(2014)\citenamefont {Bell},
  \citenamefont {Herrera-Mart{\'\i}}, \citenamefont {Tame}, \citenamefont
  {Markham}, \citenamefont {Wadsworth},\ and\ \citenamefont
  {Rarity}}]{graphstates}%
  \BibitemOpen
  \bibfield  {author} {\bibinfo {author} {\bibfnamefont {B.}~\bibnamefont
  {Bell}}, \bibinfo {author} {\bibfnamefont {D.}~\bibnamefont
  {Herrera-Mart{\'\i}}}, \bibinfo {author} {\bibfnamefont {M.}~\bibnamefont
  {Tame}}, \bibinfo {author} {\bibfnamefont {D.}~\bibnamefont {Markham}},
  \bibinfo {author} {\bibfnamefont {W.}~\bibnamefont {Wadsworth}},\ and\
  \bibinfo {author} {\bibfnamefont {J.}~\bibnamefont {Rarity}},\ }\bibfield
  {title} {\bibinfo {title} {Experimental demonstration of a graph state
  quantum error-correction code},\ }\href@noop {} {\bibfield  {journal}
  {\bibinfo  {journal} {Nature communications}\ }\textbf {\bibinfo {volume}
  {5}},\ \bibinfo {pages} {1} (\bibinfo {year} {2014})}\BibitemShut {NoStop}%
\bibitem [{\citenamefont {Li}\ \emph {et~al.}(2011)\citenamefont {Li},
  \citenamefont {Nakahara}, \citenamefont {Poon}, \citenamefont {Sze},\ and\
  \citenamefont {Tomita}}]{full}%
  \BibitemOpen
  \bibfield  {author} {\bibinfo {author} {\bibfnamefont {C.-K.}\ \bibnamefont
  {Li}}, \bibinfo {author} {\bibfnamefont {M.}~\bibnamefont {Nakahara}},
  \bibinfo {author} {\bibfnamefont {Y.-T.}\ \bibnamefont {Poon}}, \bibinfo
  {author} {\bibfnamefont {N.-S.}\ \bibnamefont {Sze}},\ and\ \bibinfo {author}
  {\bibfnamefont {H.}~\bibnamefont {Tomita}},\ }\bibfield  {title} {\bibinfo
  {title} {Efficient quantum error correction for fully correlated noise},\
  }\href@noop {} {\bibfield  {journal} {\bibinfo  {journal} {Physics Letters
  A}\ }\textbf {\bibinfo {volume} {375}},\ \bibinfo {pages} {3255} (\bibinfo
  {year} {2011})}\BibitemShut {NoStop}%
\bibitem [{\citenamefont {Calderbank}\ and\ \citenamefont
  {Shor}(1996)}]{goodcodes}%
  \BibitemOpen
  \bibfield  {author} {\bibinfo {author} {\bibfnamefont {A.~R.}\ \bibnamefont
  {Calderbank}}\ and\ \bibinfo {author} {\bibfnamefont {P.~W.}\ \bibnamefont
  {Shor}},\ }\bibfield  {title} {\bibinfo {title} {Good quantum
  error-correcting codes exist},\ }\href@noop {} {\bibfield  {journal}
  {\bibinfo  {journal} {Physical Review A}\ }\textbf {\bibinfo {volume} {54}},\
  \bibinfo {pages} {1098} (\bibinfo {year} {1996})}\BibitemShut {NoStop}%
\bibitem [{\citenamefont {Gottesman}(1998)}]{fault}%
  \BibitemOpen
  \bibfield  {author} {\bibinfo {author} {\bibfnamefont {D.}~\bibnamefont
  {Gottesman}},\ }\bibfield  {title} {\bibinfo {title} {Theory of
  fault-tolerant quantum computation},\ }\href@noop {} {\bibfield  {journal}
  {\bibinfo  {journal} {Physical Review A}\ }\textbf {\bibinfo {volume} {57}},\
  \bibinfo {pages} {127} (\bibinfo {year} {1998})}\BibitemShut {NoStop}%
\bibitem [{\citenamefont {Calderbank}\ \emph {et~al.}(1998)\citenamefont
  {Calderbank}, \citenamefont {Rains}, \citenamefont {Shor},\ and\
  \citenamefont {Sloane}}]{gf4}%
  \BibitemOpen
  \bibfield  {author} {\bibinfo {author} {\bibfnamefont {A.~R.}\ \bibnamefont
  {Calderbank}}, \bibinfo {author} {\bibfnamefont {E.~M.}\ \bibnamefont
  {Rains}}, \bibinfo {author} {\bibfnamefont {P.}~\bibnamefont {Shor}},\ and\
  \bibinfo {author} {\bibfnamefont {N.~J.}\ \bibnamefont {Sloane}},\ }\bibfield
   {title} {\bibinfo {title} {Quantum error correction via codes over gf (4)},\
  }\href@noop {} {\bibfield  {journal} {\bibinfo  {journal} {IEEE Transactions
  on Information Theory}\ }\textbf {\bibinfo {volume} {44}},\ \bibinfo {pages}
  {1369} (\bibinfo {year} {1998})}\BibitemShut {NoStop}%
\bibitem [{\citenamefont {Rengaswamy}\ \emph {et~al.}(2018)\citenamefont
  {Rengaswamy}, \citenamefont {Calderbank}, \citenamefont {Pfister},\ and\
  \citenamefont {Kadhe}}]{calcliff}%
  \BibitemOpen
  \bibfield  {author} {\bibinfo {author} {\bibfnamefont {N.}~\bibnamefont
  {Rengaswamy}}, \bibinfo {author} {\bibfnamefont {R.}~\bibnamefont
  {Calderbank}}, \bibinfo {author} {\bibfnamefont {H.~D.}\ \bibnamefont
  {Pfister}},\ and\ \bibinfo {author} {\bibfnamefont {S.}~\bibnamefont
  {Kadhe}},\ }\bibfield  {title} {\bibinfo {title} {Synthesis of logical
  clifford operators via symplectic geometry},\ }in\ \href@noop {} {\emph
  {\bibinfo {booktitle} {2018 IEEE International Symposium on Information
  Theory (ISIT)}}}\ (\bibinfo {organization} {IEEE},\ \bibinfo {year} {2018})\
  pp.\ \bibinfo {pages} {791--795}\BibitemShut {NoStop}%
\bibitem [{\citenamefont {Bennett}\ \emph {et~al.}(1996)\citenamefont
  {Bennett}, \citenamefont {DiVincenzo}, \citenamefont {Smolin},\ and\
  \citenamefont {Wootters}}]{mixed}%
  \BibitemOpen
  \bibfield  {author} {\bibinfo {author} {\bibfnamefont {C.~H.}\ \bibnamefont
  {Bennett}}, \bibinfo {author} {\bibfnamefont {D.~P.}\ \bibnamefont
  {DiVincenzo}}, \bibinfo {author} {\bibfnamefont {J.~A.}\ \bibnamefont
  {Smolin}},\ and\ \bibinfo {author} {\bibfnamefont {W.~K.}\ \bibnamefont
  {Wootters}},\ }\bibfield  {title} {\bibinfo {title} {Mixed-state entanglement
  and quantum error correction},\ }\href@noop {} {\bibfield  {journal}
  {\bibinfo  {journal} {Physical Review A}\ }\textbf {\bibinfo {volume} {54}},\
  \bibinfo {pages} {3824} (\bibinfo {year} {1996})}\BibitemShut {NoStop}%
\bibitem [{\citenamefont {Bollob{\'a}s}\ and\ \citenamefont
  {Bollobas}(1998)}]{graphtheorybook}%
  \BibitemOpen
  \bibfield  {author} {\bibinfo {author} {\bibfnamefont {B.}~\bibnamefont
  {Bollob{\'a}s}}\ and\ \bibinfo {author} {\bibfnamefont {B.}~\bibnamefont
  {Bollobas}},\ }\href@noop {} {\emph {\bibinfo {title} {Modern graph
  theory}}},\ Vol.\ \bibinfo {volume} {184}\ (\bibinfo  {publisher} {Springer
  Science \& Business Media},\ \bibinfo {year} {1998})\BibitemShut {NoStop}%
\bibitem [{\citenamefont {Kelarev}(2002)}]{cayley}%
  \BibitemOpen
  \bibfield  {author} {\bibinfo {author} {\bibfnamefont {A.~V.}\ \bibnamefont
  {Kelarev}},\ }\bibfield  {title} {\bibinfo {title} {On undirected cayley
  graphs},\ }\href@noop {} {\bibfield  {journal} {\bibinfo  {journal}
  {Australasian Journal of Combinatorics}\ }\textbf {\bibinfo {volume} {25}},\
  \bibinfo {pages} {73} (\bibinfo {year} {2002})}\BibitemShut {NoStop}%
\bibitem [{\citenamefont {Laflamme}\ \emph {et~al.}(1996)\citenamefont
  {Laflamme}, \citenamefont {Miquel}, \citenamefont {Paz},\ and\ \citenamefont
  {Zurek}}]{perfect}%
  \BibitemOpen
  \bibfield  {author} {\bibinfo {author} {\bibfnamefont {R.}~\bibnamefont
  {Laflamme}}, \bibinfo {author} {\bibfnamefont {C.}~\bibnamefont {Miquel}},
  \bibinfo {author} {\bibfnamefont {J.~P.}\ \bibnamefont {Paz}},\ and\ \bibinfo
  {author} {\bibfnamefont {W.~H.}\ \bibnamefont {Zurek}},\ }\bibfield  {title}
  {\bibinfo {title} {Perfect quantum error correcting code},\ }\href@noop {}
  {\bibfield  {journal} {\bibinfo  {journal} {Physical Review Letters}\
  }\textbf {\bibinfo {volume} {77}},\ \bibinfo {pages} {198} (\bibinfo {year}
  {1996})}\BibitemShut {NoStop}%
\bibitem [{\citenamefont {Chiribella}\ \emph {et~al.}(2011)\citenamefont
  {Chiribella}, \citenamefont {Dall'Arno}, \citenamefont {D'Ariano},
  \citenamefont {Macchiavello},\ and\ \citenamefont {Perinotti}}]{newfull}%
  \BibitemOpen
  \bibfield  {author} {\bibinfo {author} {\bibfnamefont {G.}~\bibnamefont
  {Chiribella}}, \bibinfo {author} {\bibfnamefont {M.}~\bibnamefont
  {Dall'Arno}}, \bibinfo {author} {\bibfnamefont {G.~M.}\ \bibnamefont
  {D'Ariano}}, \bibinfo {author} {\bibfnamefont {C.}~\bibnamefont
  {Macchiavello}},\ and\ \bibinfo {author} {\bibfnamefont {P.}~\bibnamefont
  {Perinotti}},\ }\bibfield  {title} {\bibinfo {title} {Quantum error
  correction with degenerate codes for correlated noise},\ }\href
  {https://doi.org/10.1103/PhysRevA.83.052305} {\bibfield  {journal} {\bibinfo
  {journal} {Phys. Rev. A}\ }\textbf {\bibinfo {volume} {83}},\ \bibinfo
  {pages} {052305} (\bibinfo {year} {2011})}\BibitemShut {NoStop}%
\bibitem [{\citenamefont {Godsil}(2012)}]{pst}%
  \BibitemOpen
  \bibfield  {author} {\bibinfo {author} {\bibfnamefont {C.}~\bibnamefont
  {Godsil}},\ }\bibfield  {title} {\bibinfo {title} {State transfer on
  graphs},\ }\href@noop {} {\bibfield  {journal} {\bibinfo  {journal} {Discrete
  Mathematics}\ }\textbf {\bibinfo {volume} {312}},\ \bibinfo {pages} {129}
  (\bibinfo {year} {2012})}\BibitemShut {NoStop}%
\bibitem [{\citenamefont {Chan}\ \emph {et~al.}(2019)\citenamefont {Chan},
  \citenamefont {Coutinho}, \citenamefont {Tamon}, \citenamefont {Vinet},\ and\
  \citenamefont {Zhan}}]{fracrevival}%
  \BibitemOpen
  \bibfield  {author} {\bibinfo {author} {\bibfnamefont {A.}~\bibnamefont
  {Chan}}, \bibinfo {author} {\bibfnamefont {G.}~\bibnamefont {Coutinho}},
  \bibinfo {author} {\bibfnamefont {C.}~\bibnamefont {Tamon}}, \bibinfo
  {author} {\bibfnamefont {L.}~\bibnamefont {Vinet}},\ and\ \bibinfo {author}
  {\bibfnamefont {H.}~\bibnamefont {Zhan}},\ }\bibfield  {title} {\bibinfo
  {title} {Quantum fractional revival on graphs},\ }\href@noop {} {\bibfield
  {journal} {\bibinfo  {journal} {Discrete Applied Mathematics}\ }\textbf
  {\bibinfo {volume} {269}},\ \bibinfo {pages} {86} (\bibinfo {year}
  {2019})}\BibitemShut {NoStop}%
\bibitem [{\citenamefont {Brown}\ \emph {et~al.}(2021)\citenamefont {Brown},
  \citenamefont {Martin},\ and\ \citenamefont {Wright}}]{gst}%
  \BibitemOpen
  \bibfield  {author} {\bibinfo {author} {\bibfnamefont {L.~C.}\ \bibnamefont
  {Brown}}, \bibinfo {author} {\bibfnamefont {W.~J.}\ \bibnamefont {Martin}},\
  and\ \bibinfo {author} {\bibfnamefont {D.}~\bibnamefont {Wright}},\
  }\bibfield  {title} {\bibinfo {title} {Continuous time quantum walks on
  graphs: Group state transfer},\ }\href@noop {} {\bibfield  {journal}
  {\bibinfo  {journal} {arXiv preprint arXiv:2103.08837}\ } (\bibinfo {year}
  {2021})}\BibitemShut {NoStop}%
\bibitem [{\citenamefont {Lov{\'a}sz}(2012)}]{graphons}%
  \BibitemOpen
  \bibfield  {author} {\bibinfo {author} {\bibfnamefont {L.}~\bibnamefont
  {Lov{\'a}sz}},\ }\href@noop {} {\emph {\bibinfo {title} {Large networks and
  graph limits}}},\ Vol.~\bibinfo {volume} {60}\ (\bibinfo  {publisher}
  {American Mathematical Soc.},\ \bibinfo {year} {2012})\BibitemShut {NoStop}%
\bibitem [{\citenamefont {Niehage}(2007)}]{gvbound}%
  \BibitemOpen
  \bibfield  {author} {\bibinfo {author} {\bibfnamefont {A.}~\bibnamefont
  {Niehage}},\ }\bibfield  {title} {\bibinfo {title} {Nonbinary quantum goppa
  codes exceeding the quantum gilbert-varshamov bound},\ }\href@noop {}
  {\bibfield  {journal} {\bibinfo  {journal} {Quantum Information Processing}\
  }\textbf {\bibinfo {volume} {6}},\ \bibinfo {pages} {143} (\bibinfo {year}
  {2007})}\BibitemShut {NoStop}%
\bibitem [{\citenamefont {Dai}\ \emph {et~al.}(2018)\citenamefont {Dai},
  \citenamefont {Yuan},\ and\ \citenamefont {Li}}]{cayleywalk}%
  \BibitemOpen
  \bibfield  {author} {\bibinfo {author} {\bibfnamefont {W.}~\bibnamefont
  {Dai}}, \bibinfo {author} {\bibfnamefont {J.}~\bibnamefont {Yuan}},\ and\
  \bibinfo {author} {\bibfnamefont {D.}~\bibnamefont {Li}},\ }\bibfield
  {title} {\bibinfo {title} {Discrete-time quantum walk on the cayley graph of
  the dihedral group},\ }\href@noop {} {\bibfield  {journal} {\bibinfo
  {journal} {Quantum Information Processing}\ }\textbf {\bibinfo {volume}
  {17}},\ \bibinfo {pages} {1} (\bibinfo {year} {2018})}\BibitemShut {NoStop}%
\end{thebibliography}%

\end{document}